\documentclass{article}
\usepackage{microtype}
\usepackage{graphicx}
\usepackage{subcaption}
\usepackage{cuted}
\usepackage{xcolor}
\usepackage{booktabs} 
\usepackage{longtable}

\usepackage{hyperref}

\newcommand{\ind}{\mathbb{I}}

\usepackage[accepted]{icml2026}

\usepackage{amsmath}
\usepackage{amssymb}
\usepackage{mathtools}
\usepackage{amsthm}
\usepackage{appendix, bm, bbm, graphicx, hyperref, mathrsfs, makecell}

\usepackage[capitalize,noabbrev]{cleveref}

\theoremstyle{plain}
\newtheorem{theorem}{Theorem}[section]

\newtheorem{lemma}[theorem]{Lemma}

\theoremstyle{definition}

\newtheorem{assumption}[theorem]{Assumption}
\theoremstyle{remark}
\newtheorem{remark}[theorem]{Remark}

\usepackage[textsize=tiny]{todonotes}

\icmltitlerunning{Stable Localized Conformal Prediction via Transduction}

\begin{document}

\twocolumn[
  \icmltitle{Stable Localized Conformal Prediction via Transduction}



  \icmlsetsymbol{equal}{*}

  \begin{icmlauthorlist}
    \icmlauthor{Yinjie Min}{nk}
    \icmlauthor{Liuhua Peng}{mel}
    \icmlauthor{Changliang Zou}{nk}
  \end{icmlauthorlist}

  \icmlaffiliation{nk}{School of Statistics and Data Science, Nankai University, Tianjin, China}
  \icmlaffiliation{mel}{School of Mathematics \& Statistics, The University of Melbourne, Melbourne, Australia}

  \icmlcorrespondingauthor{Liuhua Peng}{liuhua.peng@unimelb.edu.au}
  \icmlcorrespondingauthor{Changliang Zou}{nk.chlzou@gmail.com}

  \icmlkeywords{Conformal prediction, stability, transfer learning}

  \vskip 0.3in
]



\printAffiliationsAndNotice{}  

\begin{abstract}
Existing evaluations of conformal prediction, such as prediction efficiency and test-conditional coverage, are defined in expectation over the calibration data. In practice, when only one calibration set of limited size is available, prediction sets often exhibit high variability in size, especially for methods with localization. We formalize this concern as \textbf{\textit{set stability}}, defined as the variance of the conditional expectation of the set size given the calibration data. To improve stability without requiring additional target-task labels, we propose Stable Conformal Prediction (StCP), a transfer learning approach that utilizes labeled source-task data and unlabeled target data. Theoretically, we characterize the marginal coverage and stability of StCP; empirically, it delivers more stable prediction sets than standard conformal prediction methods, especially for those with localization, when calibration data are limited.
\end{abstract}

\section{Introduction}
Machine learning algorithms typically provide point predictions, such as conditional mean estimates for regression or class labels for classification. In risk-sensitive applications, however, reliable uncertainty quantification is essential to ensure safety \citep{kompa2021second}. Conformal prediction (CP) \citep{vovk2005algorithmic} offers a statistically rigorous approach that outputs set-valued predictions with guaranteed coverage. In the standard formulation, we are given a labeled calibration dataset $\mathcal{L}_n=\{(X_i,Y_i)\}_{i=1}^n$, where the labeled pairs are drawn i.i.d.\ from a distribution $P=P_X\times P_{Y\mid X}$ supported on $\mathcal{X}\times\mathcal{Y}$. Using $\mathcal{L}_n$, the goal is to construct a prediction set $\widehat{C}(\cdot)$ such that for a new test point $(X_{n+1},Y_{n+1})\sim P$ with $Y_{n+1}$ unobserved, the marginal coverage probability satisfies
\[
    \mathbb{P}\bigl(Y_{n+1}\in\widehat{C}(X_{n+1})\bigr)\ge 1-\alpha,
\]
where $1-\alpha\in(0,1)$ is a user-specified nominal coverage level. This finite-sample, distribution-free coverage guarantee constitutes a foundation of conformal prediction.

Beyond marginal coverage, existing conformal prediction methods are often evaluated from two perspectives \citep{zhou2025conformal}. First, the \textbf{\textit{prediction efficiency}} \citep{vovk2016criteria}, often measured by the expected size of the prediction set $\mathbb{E}|\widehat{C}(X_{n+1})|$, where $|\cdot|$ denotes the Lebesgue measure, quantifies its precision. Among all sets satisfying the marginal coverage requirement, shorter or smaller sets are more efficient and informative, and thus preferred. Second, \textbf{\textit{test-conditional coverage}} \citep{shafer2008tutorial} is considered because a marginally valid prediction set may still undercover in certain regions of the data space. The test-conditional coverage at a point $x\in\mathcal{X}$ is defined as $\mathbb{P}(Y_{n+1}\in\widehat{C}(X_{n+1})\mid X_{n+1}=x)$. It has been shown that strict finite-sample conditional coverage is often unattainable without strong assumptions \citep{lei2013distribution,lei2014distribution}. Therefore, existing works aim for it to be asymptotically close to the required level $1-\alpha$.

While both criteria provide important perspectives on conformal prediction sets, they are defined in expectation over the calibration data $\mathcal{L}_n$ and therefore primarily characterize average behavior across repeated calibration samples. In practice, however, we typically observe only one calibration dataset. Consequently, a method with good average performance may still produce prediction sets with highly variable sizes for a given calibration sample, which is a key practical concern not captured by expectation-based metrics. This calibration-conditional perspective is related to works on training-conditional coverage \citep{bian2023training,liang2025algorithmic}, which study how coverage depends on the calibration sample. In this paper, we focus on the stability of prediction set size under calibration randomness.

The issue of highly variable set sizes is particularly pronounced for conformal prediction methods that target asymptotic test-conditional coverage \citep{lei2014distribution,romano2019conformalized,chernozhukov2021distributional,guan2023localized,gibbs2025conformal}. These methods typically rely on local calibration around the test point, and we generically refer to them as \textbf{\textit{conformal prediction with localization}} methods. Even with a well-trained predictive model, limited calibration sample size can induce substantial instability in prediction sets. The local calibration step amplifies fluctuations from small effective sample sizes, making the output highly sensitive to the realized calibration set, which we further analyze in Section~\ref{sec:motivation}.

These observations motivate the need for a novel evaluation metric that quantifies the sensitivity to calibration data. We introduce \textbf{\textit{set stability}}, defined as the variance of the expected prediction set size with respect to the calibration data. Since the conformal prediction set $\widehat{C}(\cdot)$ is constructed from the calibration data, its properties align closely with this dataset. Formally, let $\mathcal{D}$ represent all data required to construct prediction set $\widehat{C}(\cdot)$, including the calibration data $\mathcal{L}_n$. The conditional expected size of the prediction set given $\mathcal{D}$ is defined as $L(\widehat{C})=\mathbb{E}\{|\widehat{C}(X_{n+1})|\mid\mathcal{D}\}$, where the expectation is taken over the test covariate $X_{n+1}$. Crucially, this expectation over $X_{n+1}$ averages out test-point randomness, so $L(\widehat{C})$ reflects the average set size given calibration dataset. The set stability is then measured by
\begin{gather}
    \mathrm{Var}(L(\widehat{C}))=\mathrm{Var}(\mathbb{E}\{|\widehat{C}(X_{n+1})|\mid\mathcal{D}\})\,.\label{eq: def stability}
\end{gather}
When $\mathcal{D}=\mathcal{L}_n$, this quantity captures the variability induced by the randomness in calibration data. A small variance indicates that the prediction set is robust to a specific calibration sample, thus providing more stable performance in practical one-shot scenarios.

To illustrate set stability, Figure~\ref{fig:demo-efficiency} shows that both prediction efficiency and stability improve as the calibration size increases.
In practice, however, the available calibration data are often limited due to time and cost constraints. Fortunately, unlabeled data from $P_X$ are typically more accessible in many real-world scenarios \citep{van2020survey}. Additionally, labeled data from related tasks may also be available, such as data from different but related tasks as in transfer learning and meta-learning \citep{pan2009survey,hospedales2021meta}, or data from other agents as in federated learning \citep{mammen2021federated}. 
One motivating example is tissue classification for medical risk stratification used in Section~\ref{sec: experiments}: low-risk patients often contribute abundant, well-labeled data, while high-risk cases are rarer and may have delayed or uncertain labels due to limited follow-up or diagnostic ambiguity, resulting in scarce labeled target data but relatively abundant unlabeled target covariates.

\begin{figure}[t]
\centering
\includegraphics[width=\linewidth]{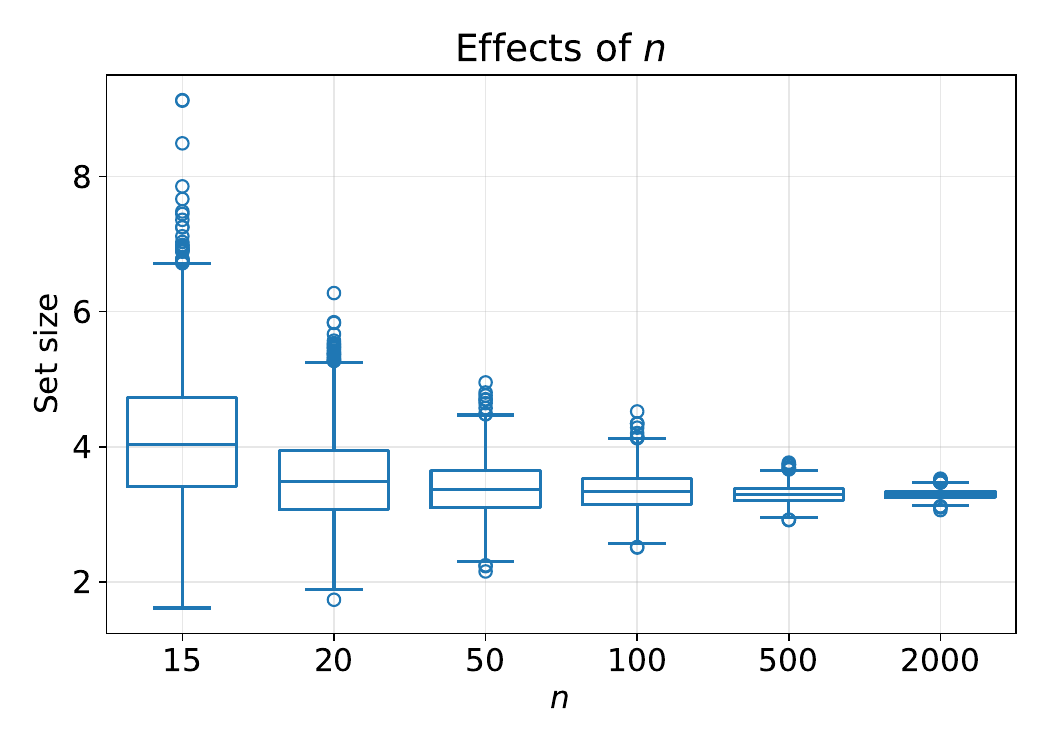}
\caption{Effects of $n$ on the prediction efficiency and stability.}
\label{fig:demo-efficiency}
\end{figure}

Therefore, we consider a weakly-supervised setting that incorporates auxiliary information from both target and source tasks. Formally, in addition to the labeled calibration set $\mathcal{L}_n$ drawn from $P$, we assume access to a larger unlabeled target sample $\mathcal{U}_m=\{\widetilde{X}_j\}_{j=1}^m$ drawn from $P_X$, as well as abundant labeled source data $\mathcal{L}_N^\prime=\{(X_k^\prime,Y_k^\prime)\}_{k=1}^N$ drawn i.i.d.\ from $P^\prime=P_X^\prime\times P_{Y\mid X}^\prime$, with $N\wedge m\gg n$. We assume transfer learning is feasible: while the covariate distributions $P_X$ and $P_X^\prime$ may differ substantially, the conditional distributions $P_{Y\mid X}$ and $P_{Y\mid X}^\prime$ are similar.

Under this setup, we propose a general transductive framework to stabilize conformal prediction while preserving core predictive properties. 
Our contributions are as follows:

\noindent (i) We propose Stable Conformal Prediction (StCP), a general stabilization framework for conformal prediction that transfers source-task information and leverages unlabeled target data.
Since instability is more pronounced under localization, StCP brings greater improvement in localized settings, yielding stable localized conformal prediction (SLCP) with enhanced stability.

\noindent (ii) We establish the marginal coverage guarantee of StCP and theoretically demonstrate the improvement of its set stability via transduction.

\noindent (iii) We validate StCP on real-world datasets, demonstrating that it improves set stability while maintaining satisfactory marginal and test-conditional performance compared with baseline methods.

\subsection{Related Works}\label{sec:related-works}

\textbf{Test-conditional coverage:} Since exact test-conditional coverage is theoretically unattainable under weak assumptions \citep{lei2013distribution}, researchers have developed conformal methods that target asymptotic conditional coverage. These typically operate via localized calibration \citep{lei2014distribution,guan2023localized,hore2025conformal,gibbs2025conformal} or via more informative score functions \citep{romano2019conformalized,chernozhukov2021distributional} to adapt prediction sets locally. A recent line of work \citep{min2025personalized} further unifies and extends these approaches.

\textbf{Prediction efficiency:} To improve prediction efficiency of conformal prediction sets, \citet{yang2025selection} and \citet{liang2024conformal} integrate parameter selection into calibration to obtain smaller sets while preserving marginal coverage validity, whereas \citet{kiyani2024length} directly optimizes set length via constrained minimization.

\textbf{Stability:} Stability has been extensively studied in learning theory as a measure of sensitivity to data perturbations and a tool for understanding generalization \citep{bousquet2002stability,hardt2016train}, and has also been used in unsupervised learning to assess robustness and guide model selection \citep{von2010clustering}.

\textbf{Semi-supervised conformal prediction:} Recent work has explored leveraging unlabeled data to improve conformal prediction when calibration size is small. \citet{seedat2023improving} employs unlabeled data to train a more adaptive conformal normalization model. Based on prediction-powered inference (PPI) \citep{angelopoulos2023prediction}, \citet{einbinder2025semi} tunes risk-controlling parameters using imputed labels from unlabeled data. \citet{angelman2025calibrating} addresses source-free domain adaptation settings by using pseudo-labels for calibration. Meanwhile, \citet{wen2025semi} develops a semi-supervised distribution learning framework that estimates the entire cumulative distribution function with both labeled and unlabeled data.

\textbf{Leveraging auxiliary information:} For scenarios with limited calibration data, many methods leverage auxiliary information. \citet{fisch2021few} uses meta-learning to transfer information from auxiliary tasks, achieving asymptotic marginal guarantees on the target. \citet{bashari2025synthetic} employs simulated data to enhance calibration, yet its guarantees rely on distributional similarity. In personalized federated settings, \citet{min2025personalized} utilizes other agents to improve test-conditional coverage for a target agent.

\section{Formulation and Motivation}\label{sec:motivation}

We start from a generic conformal construction with a pre-trained non-conformity score $S:\mathcal{X}\times\mathcal{Y}\rightarrow\mathbb{R}$. The score may be an ordinary residual score or a score from localized methods, such as the GLCP score discussed in Remark~\ref{remark:glcp-size}.

Given the score, the conformal set is constructed as
\begin{equation}
\widehat{C}(X_{n+1})=\left\{y:S(X_{n+1},y)\leq \widehat{q}\right\},
\label{eq:glcp}
\end{equation}
with $\widehat{q}=Q(1-\alpha;(n+1)^{-1}\{\sum_{i=1}^n\delta_{S(X_i,Y_i)}+\delta_\infty\})$ defined as the $1-\alpha$ quantile of the empirical distribution $(n+1)^{-1}\{\sum_{i=1}^n\delta_{S(X_i,Y_i)}+\delta_\infty\}$, where $\delta_a$ is a point mass at $a$. As $(X_1,Y_1),\ldots,(X_n,Y_n),(X_{n+1},Y_{n+1})$ are i.i.d., the set achieves the marginal coverage guarantee
\[
  \mathbb{P}(Y_{n+1}\in\widehat{C}(X_{n+1}))\in\left[1-\alpha,1-\alpha+(n+1)^{-1}\right).
\]
When the pre-trained score $S$ is sufficiently informative, the set $\widehat{C}(X_{n+1})$ can achieve asymptotic test-conditional coverage (e.g., in localized methods \citep{min2025personalized}):
\[
  \mathbb{P}(Y_{n+1}\in\widehat{C}(X_{n+1})\mid X_{n+1}=x)\to 1-\alpha
\]
for any fixed $x\in\mathcal{X}$ as $n\to\infty$.

\begin{remark}\label{remark:glcp-size}
As a special case, GLCP \citep{min2025personalized} sets $S(x,y)=\phi(\widehat{F}_{V\mid X}(V(x,y)\mid x))$ with a base score $V(x,y)$ and a fixed map $\phi:[0,1]\to[0,1]$. When $V(x,y)=|y-\widehat{\mu}(x)|$ and $\phi(v)=v$, where $\widehat{\mu}(\cdot):\mathcal{X}\rightarrow\mathcal{Y}$ is a pre-trained predictor, the set $\widehat{C}(X_{n+1})$ can be rewritten as
\begin{align}
  \notag &[\widehat{\mu}(X_{n+1})-Q(\widehat{q};\widehat{F}_{V\mid X}(\cdot\mid X_{n+1})), \\
  & ~~~~~~~~~~~~~~ \widehat{\mu}(X_{n+1})+Q(\widehat{q};\widehat{F}_{V\mid X}(\cdot\mid X_{n+1}))]\,,\label{eq:simple-glcp-set}
\end{align}
where $Q(\widehat{q};\widehat{F}_{V\mid X}(\cdot\mid X_{n+1}))$ is the $\widehat{q}$ quantile of $\widehat{F}_{V\mid X}(\cdot\mid X_{n+1})$. In this case, the prediction set at $X_{n+1}=x$ has size $2Q(\widehat{q};\widehat{F}_{V\mid X}(\cdot\mid x))$.
\end{remark}

Let $\mathcal{D}=(\mathcal{L}_n,\mathcal{D}_{\rm aux})$, where $\mathcal{D}_{\rm aux}$ denotes any auxiliary randomness used by the score construction (e.g., pre-trained models or external datasets). Then the law of total variance yields the following decomposition of $\mathrm{Var}(L(\widehat C))$ for a general conformal prediction set $\widehat{C}(\cdot)$:
\begin{equation}
    \mathrm{Var}(\mathbb{E}\{L(\widehat{C})\mid\mathcal{D}_{\rm aux}\})+\mathbb{E}\{\mathrm{Var}(L(\widehat{C})\mid\mathcal{D}_{\rm aux})\}.
\label{eq:tt-variance}
\end{equation}
The first term in \eqref{eq:tt-variance} measures variation induced by auxiliary components in the score construction. Since the $\mathcal{D}_{\rm aux}$ determines the score construction of the conformal set and may contain localized information, we keep it unchanged. Therefore, we focus solely on the second term. Throughout the subsequent discussion, we treat $\mathcal{D}_{\rm aux}$ as fixed. This renders the second term equivalent to setting $\mathcal{D}=\mathcal{L}_n$ in \eqref{eq: def stability}, implying that we focus exclusively on the \textbf{calibration-sample-induced variation}. 
By Lemma~\ref{lemma:stability-variance} in Section~\ref{sec:theory}, the second term in \eqref{eq:tt-variance} is controlled by the variability of the score quantile estimator. Let $F_S(s)$ be the marginal CDF of $S(X,Y)$ at $S=s$ for $(X,Y)\sim P$ and its empirical version based on $\mathcal{L}_n$ is denoted by
\[
\widehat{F}_S^0(s)=n^{-1}\sum_{i=1}^n \ind(S(X_i,Y_i)\leq s).
\]
The $\widehat{q}$ is thus the $1-\alpha_n$ quantile of $\widehat{F}_S^0(s)$ with $\alpha_n=1-(1-\alpha)(n+1)/n$. This implies that the stability of $\widehat{q}$ depends on the stability of $\widehat{F}_S^0(s)$. Therefore, we propose to stabilize conformal prediction by leveraging abundant labeled source data $\mathcal{L}_N^\prime$ and unlabeled target data $\mathcal{U}_m$ to obtain more stable score-distribution and quantile estimation.

\section{Methodology}\label{sec:method}

Our goal is to improve the stability of conformal prediction sets while retaining satisfactory marginal coverage. The key objective is to estimate $F_S(s)$ more reliably than $\widehat{F}_S^0(s)$ by leveraging the unlabeled data $\mathcal{U}_m$. Write $F_S(s)$ as
\begin{align}
  F_S(s)&=\mathbb{E}\left[\mathbb{E}\{\ind(S(X,Y)\leq s)\mid X\}\right]\\
  &=\mathbb{E}\left[F_{S\mid X}(s\mid X)\right]\,,\label{eq:quantile-unlabeled-conditional}
\end{align}
where the outer expectation is over $X\sim P_X$. This representation shows that estimating $F_S(s)$ from $\mathcal{U}_m$ requires an estimator of the conditional distribution $F_{S\mid X}$.

We therefore start by estimating $F_{S\mid X}$. As indicated in \eqref{eq:tt-variance} and \eqref{eq:quantile-unlabeled-conditional}, obtaining a stable estimator of $F_{S\mid X}$ is crucial. In practice, the labeled source data $\mathcal{L}_N^\prime$ are often substantially more abundant than the target labeled data. Although substantial covariate shift may exist between the source and target domains, using $\mathcal{L}_N^\prime$ can potentially yield a more accurate estimator of $F_{S\mid X}$ under moderate concept shift. Let $\mathcal{A}(\cdot)$ denote a generic algorithm for conditional CDF estimation \citep{kneib2023rage}. Using the source data $\mathcal{L}_N^\prime$, we obtain $\widehat{F}_{S\mid X}=\mathcal{A}(\mathcal{L}_N^\prime)$ as an initial estimator of $F_{S\mid X}$.

\begin{remark}
Due to the distribution shift between $P$ and $P^\prime$, directly merging the source data into the target set and applying the conformal prediction procedure on the combined dataset can severely compromise the marginal validity of the resulting conformal prediction set.
\end{remark}

Because $\mathcal{U}_m$ contains a large number of unlabeled samples, the outer expectation in \eqref{eq:quantile-unlabeled-conditional} can be approximated using $\mathcal{U}_m$. Accordingly, for any estimator $\widetilde{F}$ of the conditional distribution $S\mid X$, \eqref{eq:quantile-unlabeled-conditional} leads to an estimator of $F_S(s)$ as
\[
  \widehat{F}_S^1(s;\widetilde{F})=m^{-1}\sum_{j=1}^m \widetilde{F}(s\mid \widetilde{X}_j).
\]
A direct construction is to take the $1-\alpha_n$ quantile of $\widehat{F}_S^1(\cdot;\widetilde{F})$, denoted by
$\widehat{q}_{\rm DP}=\inf\left\{s:\widehat{F}_S^1(s;\widetilde{F})\geq 1-\alpha_n\right\}$,
and define the \textit{Direct Plug-in Set (DPS)}
\[
\widehat{C}_{\rm DP}(X_{n+1})=\left\{y:S(X_{n+1},y)\leq\widehat{q}_{\rm DP}\right\}.
\]
However, DPS can seriously violate marginal coverage validity, because $\widetilde{F}$ is only an estimated conditional distribution and its misspecification error can be directly translated into bias in $\widehat{F}_S^1$ and thus in $\widehat{q}_{\rm DP}$.

To reduce this bias, \citet{wen2025semi} and prediction-powered correction ideas \citep{angelopoulos2023prediction} consider debiased CDF estimators of the form
$\breve{F}_S(s)=\widehat{F}_S^0(s)+\widehat{F}_S^1(s;\widetilde{F})-n^{-1}\sum_{j=1}^n \widetilde{F}(s\mid X_j)$.
When $\widetilde{F}$ has sufficient conditional accuracy, this correction can reduce the estimation variance of the marginal CDF. Yet it still does not provide a strict marginal guarantee in general, nor a uniform prediction-set variance-reduction guarantee under weak assumptions.

Motivated by this limitation, we design a transductive calibration strategy that directly controls marginal information through $\widehat{F}_S^0$ while transferring stable information from source data via $\widehat{F}_{S\mid X}$. We achieve this by recalibrating $\widehat{F}_{S\mid X}$ with the use of both $\mathcal{U}_m$ and $\mathcal{L}_n$.

Intuitively, with $\widetilde{F}$, the resulting estimator $\widehat{F}_S^1(s;\widetilde{F})$ should not deviate significantly from $\widehat{F}_S^0(s)$ to serve as a good estimator of $F_S(s)$. Thus, we consider calibrating $\widehat{F}_{S\mid X}$ using $\mathcal{L}_n$ to obtain a refined estimator $\widetilde{F}_{S\mid X}$. Suppose $\widetilde{F}$ comes from a solution space $\mathcal{F}$ containing $\widehat{F}_{S\mid X}$. Let $d(\cdot,\cdot)$ be a distance measure between two distributions on $\mathbb{R}$ (e.g., KL divergence or Wasserstein distance), and let $R(\widetilde{F},\widehat{F}_{S\mid X})$ be a regularization term penalizing deviations from the initial estimator $\widehat{F}_{S\mid X}$, with $R(\widehat{F}_{S\mid X},\widehat{F}_{S\mid X})=0$. For a tuning parameter $\lambda>0$, we define the calibrated estimator as
\begin{equation}
\widetilde{F}_{S\mid X}
=\underset{\widetilde{F}\in\mathcal{F}}{\arg\min}\;
d\left(\widehat{F}_S^0(\cdot),\widehat{F}_S^1(\cdot;\widetilde{F})\right)+\lambda R(\widetilde{F},\widehat{F}_{S\mid X}).
\label{eq:aligned-conditional-distribution}
\end{equation}
Setting $\lambda=0$ encourages full alignment between distribution estimate $\widehat{F}_S^1(\cdot;\widetilde{F}_{S\mid X})$ and $\widehat{F}_S^0(\cdot)$, which uses only $\mathcal{L}_n$. In our implementation, this alignment is carried out on a finite quantile grid that contains $1-\alpha_n$, so the target conformal quantile is recovered at $\lambda=0$; see Appendix~\ref{app:wasserstein-grid}. In contrast, as $\lambda\to\infty$, $\widetilde{F}_{S\mid X}$ converges to $\widehat{F}_{S\mid X}$, thus fully trusting the source-derived estimator. The estimator $\widehat{F}_{S\mid X}$ serves as a natural starting point for aligning $\widetilde{F}_{S\mid X}$, since it is estimated from abundant source data and provides a stable and relatively reliable approximation. Adjusting $\lambda$ thus enables smooth interpolation between the purely target-calibrated and the fully source-informed regimes, allowing us to control the weight placed on $\widehat{F}_{S\mid X}$ and thereby enhance the estimation of $F_S(s)$ on the target data.

Finally, given $\widetilde{F}_{S\mid X}$, we estimate $Q(1-\alpha_n;F_S)$, the $1-\alpha_n$ quantile of $F_S(s)$, by the aligned quantile estimator
\begin{equation}
  \widehat{q}_{\rm St}=\inf\left\{s:\widehat{F}_S^1(s;\widetilde{F}_{S\mid X})\geq 1-\alpha_n\right\}.
\label{eq:aligned-quantile-estimate}
\end{equation}
We construct the Stable Conformal Prediction (StCP) set via transduction as
\begin{equation}
  \widehat{C}_{\rm St}(X_{n+1})=\left\{y:S(X_{n+1},y)\leq\widehat{q}_{\rm St}\right\}.
\label{eq:stcp-cp}
\end{equation}
The StCP procedure for the target task is outlined in Algorithm~\ref{alg:stcp}. In the theoretical analysis in Section~\ref{sec:theory}, we demonstrate that the design of StCP ensures robustness against inaccuracies in the initial conditional CDF estimator. Under varying levels of accuracy in the initial estimator, we establish a trade-off between achieving marginal validity and obtaining enhanced stability.

\begin{algorithm}[tb]
  \caption{Stable Conformal Prediction (StCP) via transduction}
  \label{alg:stcp}
  \begin{algorithmic}
    \STATE {\bfseries Input:} labeled target data $\mathcal{L}_n=\{(X_i,Y_i)\}_{i=1}^n$, unlabeled target data $\mathcal{U}_m=\{\widetilde{X}_j\}_{j=1}^m$, labeled source data $\mathcal{L}_N^\prime$, score function $S(x,y)$, conditional CDF learner $\mathcal{A}$, candidate space $\mathcal{F}$, coverage level $1-\alpha$, tuning parameter $\lambda$, test point $X_{n+1}$
    \vspace{4pt}
    \STATE 1. Train an initial conditional CDF estimator from source data: $\widehat{F}_{S\mid X}(\cdot\mid x)=\mathcal{A}(\mathcal{L}_N^\prime)$
    \STATE 2. Compute scores $S_i=S(X_i,Y_i)$ for $i=1,\ldots,n$, and form $\widehat{F}_S^0(s)=n^{-1}\sum_{i=1}^n \mathbbm{1}(S_i\le s)$
    \STATE 3. For each $\widetilde{F}\in\mathcal{F}$, compute the transductive marginal CDF estimate $\widehat{F}_S^1(s;\widetilde{F})=m^{-1}\sum_{j=1}^{m}\widetilde{F}(s\mid \widetilde{X}_j)$
    \STATE 4. Calibrate the $\widetilde{F}_{S\mid X}$ by marginal alignment: 
    \vspace{2pt}
    \STATE \hspace{1em} $\arg\min_{\widetilde{F}\in\mathcal{F}} d\ \bigl(\widehat{F}_S^0,\widehat{F}_S^1(\cdot;\widetilde{F})\bigr)+\lambda R(\widetilde{F},\widehat{F}_{S\mid X})$
    \vspace{2pt}
    \STATE 5. Set $\alpha_n=1-(1-\alpha)(n+1)/n$ and compute
    \vspace{2pt}
    \STATE \hspace{1em} $\widehat{q}_{\rm St}=\inf\bigl\{s:\widehat{F}_S^1(s;\widetilde{F}_{S\mid X})\ge 1-\alpha_n\bigr\}$
    \vspace{2pt}
    \STATE 6. Construct $\widehat{C}_{\rm St}(X_{n+1})=\{y:S(X_{n+1},y)\le \widehat{q}_{\rm St}\}$
    \vspace{4pt}
    \STATE {\bfseries Return:} $\widehat{q}_{\rm St}$ and $\widehat{C}_{\rm St}(X_{n+1})$
  \end{algorithmic}
\end{algorithm}

The overall workflow is summarized in Figure~\ref{fig:workflow}; in particular, solid boxes represent labeled data, while dashed boxes denote unlabeled data; blue elements correspond to operations on the target data, orange elements to operations on the source data, and green elements highlight the alignment step and the final construction of the StCP set. 

\begin{figure}[h]
\centering
\includegraphics[width=\linewidth]{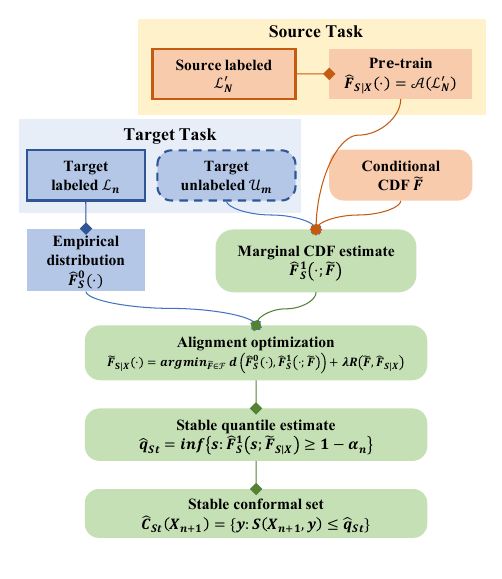}
\caption{Workflow of StCP.}
\label{fig:workflow}
\end{figure}

Importantly, StCP is not restricted to standard conformal scores. It can be directly applied to conformal methods with localization by replacing the generic score $S$ with the corresponding localized score, leading to Stable Localized Conformal Prediction (SLCP). The theoretical results in Section~\ref{sec:theory} continue to apply. In practice, this localized version is especially beneficial: SLCP maintains conditional-coverage performance while achieving substantially larger stability gains, as confirmed by the experiments in Section~\ref{sec: experiments}.

\subsection{Data-driven Parameter Selection of $\lambda$}\label{sec:para-sel}

As shown in Section~5.2, StCP is generally robust to $\lambda$ over a wide range. Nevertheless, when the trustworthiness of the source information is uncertain, a data-driven selection rule is still needed. The detailed pseudocode is provided in Appendix Algorithm~\ref{alg:lambda-sel}. Let $\alpha_{\rm tol}>0$ be a user-specified tolerance and let $\Lambda$ be a candidate grid of $\lambda$ values. For each $\lambda\in\Lambda$, compute the StCP quantile estimator and denote it by $\widehat{q}_{{\rm St},\lambda}$. Define
\[
  q_{\rm L}=Q(1-\alpha-\alpha_{\rm tol};\widehat{F}_S^0),
  \quad
  q_{\rm U}=Q(1-\alpha+\alpha_{\rm tol};\widehat{F}_S^0),
\]
and the feasible set
\[
  \Lambda_{\rm feas}(\alpha,\alpha_{\rm tol})
  =\left\{\lambda\in\Lambda:\widehat{q}_{{\rm St},\lambda}\in[q_{\rm L},q_{\rm U}]\right\}.
\]
We select and define $\widehat{\lambda}=\max \Lambda_{\rm feas}(\alpha,\alpha_{\rm tol})$, $\widehat{q}_{{\rm St\mbox{-}sel}}=\widehat{q}_{{\rm St},\widehat{\lambda}}$. 
In our implementation, this rule is well-defined because $\lambda=0$ belongs to $\Lambda_{\rm feas}(\alpha,\alpha_{\rm tol})$: the Wasserstein alignment is computed on a finite quantile grid containing $1-\alpha_n$, so the resulting StCP quantile at $\lambda=0$ coincides with the empirical conformal quantile based on $\widehat{F}_S^0$; see Appendix~\ref{app:wasserstein-grid}. Since larger $\lambda$ yields stronger variance reduction (Theorem~\ref{theo:reduced-variance}), selecting the largest feasible value provides a stability-oriented choice under the tolerance constraint. Finally, define
\[
  \widehat{C}_{\rm St\mbox{-}sel}(X_{n+1})=\{y:S(X_{n+1},y)\le \widehat{q}_{{\rm St\mbox{-}sel}}\}.
\]
That is, we construct the final StCP set using the selected parameter $\widehat{\lambda}$.

\section{Theoretical Analysis}\label{sec:theory}

First, we show that StCP preserves satisfactory marginal coverage. When $\lambda=0$, the optimization problem in \eqref{eq:aligned-conditional-distribution} reduces to directly aligning $\widehat{F}_S^1(\cdot;\widetilde{F}_{S\mid X})$ with $\widehat{F}_S^0(\cdot)$; in our implementation, this recovers the standard conformal quantile based on target calibration scores through the finite-grid alignment described in Appendix~\ref{app:wasserstein-grid}. When $\lambda\neq 0$, if the source-based initial estimator $\widehat{F}_{S\mid X}$ is reasonably accurate, StCP remains close to the nominal level $1-\alpha$.

To formalize this intuition, we establish a bound on the marginal coverage of StCP under a parametric model for the conditional distribution $F_{S\mid X}$. Suppose the function space $\mathcal{F}$ is parameterized as $\mathcal{F}=\{F(\cdot\mid\cdot;\theta):\theta\in\Theta\}$, and let $F_{S\mid X}(\cdot\mid\cdot)=F(\cdot\mid\cdot;\theta^*)$ denote the true conditional score distribution. Assume that the estimator $\widehat{F}_{S\mid X}(\cdot\mid\cdot)$ takes the form $\widehat{F}_{S\mid X}(\cdot\mid\cdot)=F(\cdot\mid\cdot;\widehat{\theta})$, where $\theta^*,\widehat{\theta}\in\Theta$. Let $f(\cdot\mid\cdot;\theta)$ denote the density function of $F(\cdot\mid\cdot;\theta)$ for $\theta\in\Theta$. For notational simplicity, we write $F_\theta=F(\cdot\mid\cdot;\theta)$.

We consider the squared $\infty$-Wasserstein distance for $d(\cdot,\cdot)$ and define the regularization term as $R(F_\theta,F_{\widehat{\theta}})=\|\theta-\widehat{\theta}\|_2^2$ for $F_\theta\in\mathcal{F}$.

\begin{assumption}\label{ass:param}
Suppose that: (i) the parameter space $\Theta$ is bounded under the $\|\cdot\|_2$ norm; (ii) the support of $F(\cdot\mid x;\theta)$ is bounded uniformly in $x$ and $\theta$; (iii) $F(s\mid x;\theta)$ is Lipschitz continuous in $\theta$; and (iv) the density $f(\cdot\mid\cdot;\theta)$ is uniformly bounded away from zero and infinity, and its partial derivatives $\partial f(s\mid x;\theta)/\partial \theta$ and $\partial f(s\mid x;\theta)/\partial s$ are uniformly upper bounded.
\end{assumption}

\begin{theorem}\label{theo:marginal}
Suppose Assumption~\ref{ass:param} holds. If there exists $\epsilon\geq 0$ such that, for any realization of the data,
\[
  \min_{\theta\in\Theta} d\left(\widehat{F}_S^0(\cdot),\widehat{F}_S^1(\cdot;F_\theta)\right)\leq \epsilon^2,
\]
define $\delta_S=\left|Q(1-\alpha;F_S)-Q\bigl(1-\alpha_n;\widehat{F}_S^1(\cdot;F_{\widehat{\theta}})\bigr)\right|$ for $1-\alpha_n=(1-\alpha)(n+1)/n$. Then there exists a constant $C>0$ such that:
\begin{align*}
  &\left|\mathbb{P}\left(Y_{n+1}\in\widehat{C}_{\rm St}(X_{n+1})\right)-(1-\alpha)\right|\\
  \leq &C\min(\epsilon+\lambda^{1/2}+n^{-1},\delta_S+\lambda^{-1/2}+n^{-1}).
\end{align*}
\end{theorem}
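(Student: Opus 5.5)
Write $\widetilde\theta$ for the minimizer in \eqref{eq:aligned-conditional-distribution}, so that $\widetilde F_{S\mid X}=F_{\widetilde\theta}$, and put $G(\theta)=d(\widehat F_S^0,\widehat F_S^1(\cdot;F_\theta))+\lambda\|\theta-\widehat\theta\|_2^2$. The plan is to reduce coverage to the distance between $\widehat q_{\rm St}$ and the population quantile $q^*=Q(1-\alpha;F_S)$. By Assumption~\ref{ass:param}(iv), $F_S$ has a density bounded above and below. Since $\widehat q_{\rm St}$ is computed from the data and is independent of the fresh test pair, conditioning on the data gives
\[
\mathbb P(Y_{n+1}\in\widehat C_{\rm St}(X_{n+1}))=\mathbb E\,F_S(\widehat q_{\rm St}).
\]
Hence the coverage error is at most $C\,\mathbb E|\widehat q_{\rm St}-q^*|$, and it suffices to bound $|\widehat q_{\rm St}-q^*|$ in two ways, producing the two arguments of the minimum. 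Both routes use two facts. First, by (iii), $\widehat F_S^1(\cdot;F_\theta)$ is Lipschitz in $\theta$. Second, by (iv), each $\widehat F_S^1(\cdot;F_\theta)$ has density bounded below, so its quantiles are Lipschitz in $\theta$ uniformly in the level. The squared $W_\infty$ distance equals the squared sup-distance between quantile functions, which converts $d$-bounds into quantile bounds.

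\textbf{First branch} ($\epsilon+\lambda^{1/2}+n^{-1}$), aimed at small $\lambda$.
\begin{enumerate}
\item Let $\theta_0$ attain $\min_\theta d\le\epsilon^2$. Optimality gives $G(\widetilde\theta)\le G(\theta_0)\le \epsilon^2+\lambda\,\mathrm{diam}(\Theta)^2$ by (i).
\item This gives $W_\infty(\widehat F_S^0,\widehat F_S^1(\cdot;F_{\widetilde\theta}))\le \epsilon+C\lambda^{1/2}$.
\item Therefore $\widehat q_{\rm St}$ is within $\epsilon+C\lambda^{1/2}$ of the $1-\alpha_n$ quantile of $\widehat F_S^0$, up to a grid/level discretization of order $n^{-1}$. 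That quantile is the standard split-conformal $\widehat q$.
\item The standard result recalled in Section~\ref{sec:motivation} gives coverage of $\widehat q$ in $[1-\alpha,1-\alpha+(n+1)^{-1})$.
\item Perturbing the threshold by $\eta$ changes $\mathbb E F_S(\cdot)$ by at most $C\eta$ (bounded density), which yields the first bound.
\end{enumerate}

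\textbf{Second branch} ($\delta_S+\lambda^{-1/2}+n^{-1}$), aimed at large $\lambda$.
\begin{enumerate}
\item Compare with $\widehat\theta$: $\lambda\|\widetilde\theta-\widehat\theta\|^2\le G(\widetilde\theta)\le G(\widehat\theta)=d(\cdot,\cdot)$ at $\widehat\theta$.
\item By (ii) the supports are bounded, so $d\le C$ and hence $\|\widetilde\theta-\widehat\theta\|\le C\lambda^{-1/2}$.
\item Quantile Lipschitzness in $\theta$ then gives $|\widehat q_{\rm St}-Q(1-\alpha_n;\widehat F_S^1(\cdot;F_{\widehat\theta}))|\le C\lambda^{-1/2}$.
\item By the definition of $\delta_S$, this yields $|\widehat q_{\rm St}-q^*|\le\delta_S+C\lambda^{-1/2}$.
\item The level mismatch $1-\alpha_n$ versus $1-\alpha$ contributes $O(n^{-1})$ by the density lower bound.
\item Taking expectation (or treating $\delta_S$ as a bound holding over realizations) and combining the two branches gives the minimum.
\end{enumerate}

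The main obstacle is the quantile-Lipschitz step. One must show that $\theta\mapsto Q(u;\widehat F_S^1(\cdot;F_\theta))$ is Lipschitz uniformly in $u$ and the data, using only the pointwise Lipschitzness of $F$ in $\theta$ and the density lower bound. I would argue via the inverse-function bound $|Q(u;G_1)-Q(u;G_2)|\le \|G_1-G_2\|_\infty/\inf g$, applied to the mixture $\widehat F_S^1$, whose density is an average of densities bounded below. A secondary care point is the $n^{-1}$ bookkeeping in the first branch. The $W_\infty$ comparison involves the empirical step function $\widehat F_S^0$, so the quantile-level match is exact only up to a jump of size $1/n$; that is where the $n^{-1}$ term and the finite-grid convention of Appendix~\ref{app:wasserstein-grid} enter.
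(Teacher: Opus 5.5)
Your proposal is correct and follows the paper's proof essentially step for step. Coverage is written as $\mathbb{E}\{F_S(\widehat{q}_{\rm St})\}$. One case compares $\widetilde{\theta}$ with the unregularized minimizer, which gives $|\widehat{q}_{\rm St}-\widehat{q}|\le\epsilon+C\lambda^{1/2}$, with the $n^{-1}$ coming from split-conformal coverage of $\widehat{q}$. The other case compares $\widetilde{\theta}$ with $\widehat{\theta}$, using bounded supports and the quantile-Lipschitz-in-$\theta$ property of the mixture $\widehat{F}_S^1$ (the paper's Lemma~\ref{lemma:quantile-bound}), which gives $\delta_S+C\lambda^{-1/2}$.

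Two of your extra terms are unnecessary but harmless. In the first branch, $W_\infty$ on $\mathbb{R}$ already bounds the gap between the left-continuous quantile functions at the exact level $1-\alpha_n$, so no $1/n$ level discretization arises. In the second branch, the $1-\alpha_n$ versus $1-\alpha$ mismatch is already built into $\delta_S$.
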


\begin{remark}
Theorem~\ref{theo:marginal} highlights two complementary regimes. If the parameter space $\Theta$ is sufficiently rich so that $\epsilon$ is small, one can choose a small $\lambda$ to keep the adjusted quantile estimator $\widehat{q}_{\rm St}$ close to the conformal quantile based on $\widehat{F}_S^0$, thereby preserving marginal coverage. Conversely, when $F_{\widehat{\theta}}=\widehat{F}_{S\mid X}(\cdot\mid\cdot)$ is already a good estimator of $F_{\theta^*}=F_{S\mid X}(\cdot\mid\cdot)$, $\delta_S$ is small, and a larger $\lambda$ still maintains satisfactory marginal coverage. Overall, Theorem~\ref{theo:marginal} formalizes the robustness of StCP to the choice of $\lambda$ under generic score models. Appendix~\ref{app:constant-C} gives a brief interpretation of the constant $C$ and clarifies how the bound depends on the main regularity parameters.
\end{remark}

Next, we turn to the analysis of set stability for conformal sets. The key driver is the variability of the estimated score quantile. Motivated by this observation, we first establish a general relationship between set stability and quantile variance in Lemma~\ref{lemma:stability-variance}.

\begin{lemma}\label{lemma:stability-variance}
Suppose the conditions of Theorem~\ref{theo:marginal} hold and the density of $F_S(s)$ is bounded away from zero. Let $\widehat{C}(\cdot)$ be a prediction set of the form $\{y:S(X_{n+1},y)\leq \widehat{q}\}$, where $\widehat{q}$ is independent of $X_{n+1}$ and $S$ is based on $\mathcal{D}_{\rm aux}$ independent of $\mathcal{L}_n$. Given $\mathcal{D}_{\rm aux}$ we assume there exists function $L_0:[a_0,b_0]\to\mathbb{R}$ such that $L(\widehat{C})=L_0(\widehat{q})$, and $L_0$ is $C_L$-Lipschitz. Then
\[
  \mathrm{Var}(L(\widehat{C})\mid\mathcal{D}_{\rm aux})=O(\mathrm{Var}(\widehat{q}\mid\mathcal{D}_{\rm aux})).
\]
\end{lemma}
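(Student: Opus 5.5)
Conditioning on $\mathcal{D}_{\rm aux}$ throughout, the score $S$ and hence the map $L_0$ are fixed (deterministic) objects, so all remaining randomness in $L(\widehat C)$ comes through $\widehat q$. The plan is to reduce the variance of $L_0(\widehat q)$ to that of $\widehat q$ by a Lipschitz comparison to a centered constant, using the elementary fact that for any random variable $Z$ and any constant $c$, $\mathrm{Var}(Z)\le \mathbb{E}(Z-c)^2$.

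First, I will justify the representation $L(\widehat C)=\mathbb{E}\{|\widehat C(X_{n+1})|\mid \mathcal{D}\}=L_0(\widehat q)$. Since $\widehat q$ is measurable with respect to $\mathcal{D}=(\mathcal{L}_n,\mathcal{D}_{\rm aux})$ and $X_{n+1}$ is independent of $\mathcal{D}$, the conditional expectation equals $\ell(q):=\mathbb{E}|\{y:S(X_{n+1},y)\le q\}|$ evaluated at $q=\widehat q$; this is exactly the hypothesized $L_0$ (for GLCP-type scores as in Remark~\ref{remark:glcp-size}, $L_0(q)=2\mathbb{E}\,Q(q;\widehat F_{V\mid X}(\cdot\mid X))$). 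Also, $\widehat q$ takes values in $[a_0,b_0]$: by Assumption~\ref{ass:param}(ii) the supports are bounded, so any quantile of $\widehat F_S^0$ or of $\widehat F_S^1(\cdot;F_\theta)$ lies in a fixed compact interval, which I take as $[a_0,b_0]$ (clipping if needed, which does not affect the set).

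Next, let $\bar q=\mathbb{E}(\widehat q\mid\mathcal{D}_{\rm aux})\in[a_0,b_0]$ by convexity. Then
\[
\mathrm{Var}(L_0(\widehat q)\mid\mathcal{D}_{\rm aux})\le \mathbb{E}\{(L_0(\widehat q)-L_0(\bar q))^2\mid\mathcal{D}_{\rm aux}\}\le C_L^2\,\mathbb{E}\{(\widehat q-\bar q)^2\mid\mathcal{D}_{\rm aux}\}=C_L^2\,\mathrm{Var}(\widehat q\mid\mathcal{D}_{\rm aux}),
\]
which is the claim with constant $C_L^2$. The density lower bound on $F_S$ and the conditions of Theorem~\ref{theo:marginal} are used only to ensure that $L_0$ is well defined and Lipschitz on the relevant range (a positive density means quantiles move continuously with the level, and the density bounds in Assumption~\ref{ass:param}(iv) make $q\mapsto Q(q;\cdot)$ Lipschitz), and that all moments are finite because everything is bounded.

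The only delicate point is the measurability/independence step that lets me write the conditional expectation over $X_{n+1}$ as a fixed function of $\widehat q$. In particular, I must check that $\widehat q$ does not depend on $X_{n+1}$ (for StCP it depends on $\mathcal{L}_n,\mathcal{U}_m,\mathcal{L}_N'$ only), so that Fubini applies. The variance inequality itself is routine.
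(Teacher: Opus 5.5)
Your proof is correct and is essentially the paper's argument. The paper symmetrizes with an i.i.d.\ copy $\widehat{q}'$, writing $2\mathrm{Var}(L_0(\widehat q)\mid\mathcal{D}_{\rm aux})=\mathbb{E}[\{L_0(\widehat q)-L_0(\widehat q')\}^2\mid\mathcal{D}_{\rm aux}]\le 2C_L^2\mathrm{Var}(\widehat q\mid\mathcal{D}_{\rm aux})$, whereas you center at the constant $L_0(\bar q)$; this gives the same constant $C_L^2$, and the only extra cost is checking $\bar q\in[a_0,b_0]$, which you did. Your remarks that the density bounds are what make $L_0$ well defined and Lipschitz are superfluous, since the Lipschitz representation $L(\widehat C)=L_0(\widehat q)$ is already a hypothesis of the lemma.
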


The Lipschitz condition on $L_0$ is mild and is satisfied in standard score constructions where prediction-set size varies smoothly with the score threshold. When the score function is pre-trained and thus independent of the calibration data, the $\mathcal{D}_{\rm aux}$ can be taken as the pre-training data. Under this setting, the stability of the conformal set is directly controlled by the variance of the quantile estimator $\widehat{q}$.

Accordingly, we take $d(\cdot,\cdot)=W_2^2(\cdot,\cdot)$ due to its differentiability and favorable operational properties. The corresponding extension to the $p$-th power of the $p$-Wasserstein distance for any $p \geq 2$ can be proved similarly, so we focus on the $W_2^2$ case for brevity. Define $R(F_\theta,F_{\widehat{\theta}})=\|\theta-\widehat{\theta}\|_2^2$ and the population counterpart of $\widehat{F}_S^1$ by $F_S^1(s;F_\theta)=\mathbb{E}\{F(s\mid X;\theta)\}$, where the expectation is taken over $X\sim P_X$.

\begin{assumption}\label{ass:variance-highlevel}
Assume that $\Theta$ is convex, the map $\theta\mapsto d(F_S,F_S^1(\cdot;F_\theta))$ is twice continuously differentiable and there exists a constant $c_d\in\mathbb{R}$ such that $\nabla_{\theta\theta}d\!\left(F_S,F_S^1(\cdot;F_\theta)\right)\succeq c_d I$.
\end{assumption}
See Appendix~\ref{app:variance-highlevel-verification} for a brief discussion of this assumption.
\begin{theorem}\label{theo:reduced-variance}
Suppose the conditions of Lemma~\ref{lemma:stability-variance} and Assumption~\ref{ass:variance-highlevel} hold. Let $\widehat{C}(X_{n+1})=\{y:S(X_{n+1},y)\le Q(1-\alpha_n;\widehat{F}_S^0)\}$. Then $\mathrm{Var}(L(\widehat{C})\mid\mathcal{D}_{\rm aux})=O(n^{-1})$. 
Moreover, if $c_d>0$, then
\[
\mathrm{Var}(L(\widehat{C}_{\rm St})\mid\mathcal{D}_{\rm aux})=O\!\left(m^{-1}+\{n(1+\lambda)^2\}^{-1}\right).
\]
If $c_d\le 0$ and $\lambda\ge 1-c_d$, then
\[
\mathrm{Var}(L(\widehat{C}_{\rm St})\mid\mathcal{D}_{\rm aux})=O\!\left(m^{-1}+(n\lambda^2)^{-1}\right).
\]
\end{theorem}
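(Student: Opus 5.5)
By Lemma~\ref{lemma:stability-variance}, both claims reduce to bounding the conditional variance of the relevant quantile estimator given $\mathcal{D}_{\rm aux}$. Conditioning on $\mathcal{D}_{\rm aux}$ fixes $\widehat{\theta}$ and the score $S$, so throughout the only randomness is $\mathcal{L}_n$ and $\mathcal{U}_m$.

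\textbf{Baseline set.} The threshold is the $1-\alpha_n$ empirical quantile of $n$ i.i.d.\ scores, and $F_S$ has density bounded away from zero. A Bahadur representation then gives $Q(1-\alpha_n;\widehat{F}_S^0)-Q(1-\alpha;F_S)=\{1-\alpha-\widehat{F}_S^0(q^*)\}/f_S(q^*)+o_P(n^{-1/2})$. Because the scores have bounded support (Assumption~\ref{ass:param}(ii)), the quantile is bounded, which turns this into a variance bound of order $O(n^{-1})$. Lemma~\ref{lemma:stability-variance} then transfers it to $L(\widehat C)$.

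\textbf{StCP: characterize $\widetilde{\theta}$.} Let $\widetilde{\theta}$ minimize
\[
M_n(\theta)=W_2^2\bigl(\widehat{F}_S^0,\widehat{F}_S^1(\cdot;F_\theta)\bigr)+\lambda\|\theta-\widehat{\theta}\|_2^2 .
\]
In one dimension, $W_2^2$ is the $L^2[0,1]$ distance between quantile functions. Write $D_n(\theta)=W_2^2(\widehat F_S^0,\widehat F_S^1(\cdot;F_\theta))$ and $D(\theta)=W_2^2(F_S,F_S^1(\cdot;F_\theta))$. The first-order condition is $\nabla D_n(\widetilde\theta)+2\lambda(\widetilde\theta-\widehat\theta)=0$. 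Define the population minimizer $\bar\theta_\lambda$ by $\nabla D(\bar\theta_\lambda)+2\lambda(\bar\theta_\lambda-\widehat\theta)=0$.

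Subtract the two equations and Taylor-expand $\nabla D$ around $\bar\theta_\lambda$. This gives
\[
\{H+2\lambda I\}(\widetilde\theta-\bar\theta_\lambda)=-\{\nabla D_n-\nabla D\}(\widetilde\theta),
\]
where $H\succeq c_d I$ is an averaged Hessian. Convexity of $\Theta$ makes the segment between the two points admissible.
\begin{itemize}
\item If $c_d>0$, the smallest eigenvalue of $H+2\lambda I$ is at least $c_d+2\lambda\gtrsim 1+\lambda$.
\item If $c_d\le 0$ and $\lambda\ge 1-c_d$, it is at least $2\lambda+c_d\ge\lambda+1\gtrsim\lambda$.
\end{itemize}
Hence $\|\widetilde\theta-\bar\theta_\lambda\|\lesssim \|\nabla D_n-\nabla D\|/(1+\lambda)$ in the first case, and with $\lambda$ in place of $1+\lambda$ in the second.

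\textbf{Gradient fluctuation.} The gradient is
\[
\nabla_\theta D_n(\theta)=-2\int_0^1\bigl\{Q(u;\widehat F_S^0)-Q(u;\widehat F_S^1(\cdot;F_\theta))\bigr\}\,\partial_\theta Q(u;\widehat F_S^1(\cdot;F_\theta))\,du .
\]
By Assumption~\ref{ass:param}(iii)--(iv), the density is bounded away from zero and its $\theta$- and $s$-derivatives are bounded, so the quantile maps and their $\theta$-derivatives are Lipschitz functionals of the CDFs. The fluctuation $\nabla D_n-\nabla D$ is therefore controlled by two terms, uniformly in $\theta$:
\begin{itemize}
\item $\sup_s|\widehat F_S^0-F_S|=O_P(n^{-1/2})$, by the DKW inequality;
\item $\sup_{s,\theta}|\widehat F_S^1(s;F_\theta)-F_S^1(s;F_\theta)|=O_P(m^{-1/2})$, by a Donsker or bracketing argument for the Lipschitz-in-$\theta$ class over bounded $\Theta$.
\end{itemize}
Bounded support makes the relevant random variables bounded, so these rates upgrade to second-moment bounds. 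This yields
\[
\mathrm{Var}(\widetilde\theta)\lesssim n^{-1}(1+\lambda)^{-2}+m^{-1},
\]
where the $m$-part may keep the slower rate without the $\lambda$ factor. Here $\bar\theta_\lambda$ is deterministic given $\mathcal{D}_{\rm aux}$, so it contributes no variance.

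\textbf{From $\widetilde\theta$ to $\widehat q_{\rm St}$.} The threshold $\widehat q_{\rm St}=Q(1-\alpha_n;\widehat F_S^1(\cdot;F_{\widetilde\theta}))$ is Lipschitz in $\widetilde\theta$ by Assumption~\ref{ass:param}(iii)--(iv). It also differs from $Q(1-\alpha_n;F_S^1(\cdot;F_{\widetilde\theta}))$ by $O_P(m^{-1/2})$, uniformly in $\theta$. Combining these gives the stated variance orders for $\widehat q_{\rm St}$, and Lemma~\ref{lemma:stability-variance} finishes the proof.

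\textbf{Main obstacle.} The hardest step is the rigorous linearization: a uniform-in-$\theta$ differentiability and stochastic-equicontinuity bound for $\nabla D_n$ through the quantile maps, together with converting $O_P$ statements into genuine variance bounds. Consistency of $\widetilde\theta$ near $\bar\theta_\lambda$ is not an issue. When $c_d>0$, or when $\lambda\ge 1-c_d$, the population objective $D+\lambda\|\theta-\widehat\theta\|_2^2$ is strongly convex, so its minimizer is unique and well separated. I would handle the remainder terms by using the bounded support to truncate, and accept the cross terms between the $\mathcal{L}_n$ and $\mathcal{U}_m$ fluctuations as higher order.
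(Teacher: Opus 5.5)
\textbf{Verdict.} Your proposal is correct in outline but reaches the parameter bound by a genuinely different route from the paper. Two steps need repair as written.

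\textbf{Comparison with the paper.} The paper expands the \emph{empirical} score $\widehat\Psi$ around the fixed population minimizer $\theta_\lambda$. It shows $\|\widehat H_\lambda(\theta_\lambda)-H_\lambda(\theta_\lambda)\|_{\mathrm{op}}=o_p(1)$, so $\|\widehat H_\lambda(\theta_\lambda)^{-1}\|=O_p(h_\lambda^{-1})$. It then bounds $\widehat\Psi(\theta_\lambda)-\Psi(\theta_\lambda)$ through three terms using only fixed-$\theta$ laws of large numbers, DKW and Hoeffding, and closes with a quadratic Taylor remainder and a fixed-point argument. It passes from $\widetilde\theta$ to $\widehat q_{\rm St}$ using the deterministic Lipschitz constants of $\widehat Q,\widetilde Q$ plus Hoeffding at $\theta_\lambda$. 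The baseline is handled by the Beta order-statistic law of $F_S(\widehat q)$.

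You instead use the exact integral-form mean-value identity for the \emph{population} gradient along the segment. This is legitimate by convexity of $\Theta$, with the curvature bound used on the whole segment. It gives the pathwise inequality $\|\widetilde\theta-\bar\theta_\lambda\|_2\le (c_d+2\lambda)^{-1}\sup_{\theta}\|\nabla D_n(\theta)-\nabla D(\theta)\|_2$. This buys a lot:
\begin{itemize}
\item no empirical-Hessian consistency;
\item no Taylor remainder;
\item no separate consistency or fixed-point step;
\item the second-moment bound on $\widetilde\theta$ follows directly from one on the supremum, whereas the paper asserts the $O_p$-to-second-moment step without proof.
\end{itemize}
So the remainder and cross terms you list as your main obstacle do not actually arise. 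The price is that you need uniform-in-$(s,\theta)$ empirical-process bounds instead of pointwise ones.

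\textbf{First repair: the gradient fluctuation needs more than two terms.} $\nabla_\theta D_n$ contains $\nabla_\theta\widehat q^{1}(u;\theta)=-\partial_\theta\widehat F_S^1/\widehat f_S^1$ evaluated at $\widehat q^{1}(u;\theta)$. These are empirical averages of $\partial_\theta F(s\mid\widetilde X_j;\theta)$ and $f(s\mid\widetilde X_j;\theta)$. Their fluctuations are not controlled by $\sup_{s,\theta}|\widehat F_S^1-F_S^1|$. The quantile derivatives are therefore not Lipschitz functionals of the CDFs alone, and two terms do not suffice.
\begin{itemize}
\item You need uniform bounds for these two classes as well; they play the role of the paper's $A_m-A$ and $B_m-B$.
\item The bracketing argument works if both classes are Lipschitz in $(s,\theta)$.
\item For $f$ this is Assumption~\ref{ass:param}(iv). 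For $\partial_\theta F$ it requires bounded $\partial_{\theta\theta}F$, which is beyond Assumption~\ref{ass:param}; the paper uses the same extra smoothness.
\end{itemize}

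\textbf{Second repair: boundedness does not give second moments.} Bounded support does not upgrade $O_P$ or $o_P$ rates to second-moment bounds. For example, a bounded $R_n$ with $\mathbb{P}(R_n=1)=n^{-1/2}$ is $o_P(n^{-1/2})$, yet $\mathbb{E}R_n^2=n^{-1/2}$. Use tail bounds instead:
\begin{itemize}
\item DKW's exponential tail integrates to $\mathbb{E}\|\widehat F_S^0-F_S\|_\infty^2\le n^{-1}$.
\item An $L^2$ maximal inequality for the bounded Lipschitz-parametric classes gives $O(m^{-1})$ for the suprema.
\end{itemize}
For the baseline you can drop Bahadur entirely. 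By Lemma~\ref{lemma:quantile-bound}, $\mathrm{Var}(\widehat q)\le\mathbb{E}\{\widehat q-Q(1-\alpha_n;F_S)\}^2\le\underline{L}_F^{-2}\,\mathbb{E}\|\widehat F_S^0-F_S\|_\infty^2=O(n^{-1})$.
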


Conditioning on $\mathcal{D}_{\rm aux}$, these conditional variances in Theorem~\ref{theo:reduced-variance} coincide with the notion of set stability.

Theorem~\ref{theo:reduced-variance} implies that the standard conformal quantile based on $n$ calibration samples achieves the variance rate $O(n^{-1})$. Nevertheless, when $n$ is small, this can lead to high instability. In contrast, the stability of the StCP set is of order $O(m^{-1}+\{n(1+\lambda)^2\}^{-1})$ when the population alignment term is already locally convex ($c_d>0$), and of order $O(m^{-1}+(n\lambda^2)^{-1})$ once $\lambda$ is large enough to dominate a nonpositive lower curvature bound ($c_d\le 0$ and $\lambda\ge 1-c_d$). In both regimes, the first term depends on the number of unlabeled samples and is negligible when $m\gg n$, while the second term decreases as the regularization level increases.

The preceding two theorems quantify coverage and stability for a fixed $\lambda$. We next show that the data-driven rule in Section~\ref{sec:para-sel} also enjoys a finite-sample marginal guarantee, without any parametric assumptions: it only requires exchangeability of the $n$ calibration scores and the test score.

\begin{table*}[ht]
\centering
\caption{Real data experiment results. In the Marginal block, ``$-$'' indicates marginal coverage below $1-\alpha-0.01$, and ``$+$'' indicates marginal coverage above $1-\alpha+1/(n+1)$. When the same superscript appears in the Std or Size blocks, it indicates that the corresponding method has the same signed marginal-coverage deviation.}
\label{tab:results-real}
{\fontsize{8}{12}\selectfont{\setlength{\tabcolsep}{1.5pt}
\begin{tabular}{cccccccccccccccccc}
\toprule
&&& \multicolumn{7}{c}{GLCP-type} && \multicolumn{7}{c}{CQR-type}\\
\cmidrule(lr){4-10} \cmidrule(lr){12-18}
& Dataset & $n/m$ & base & SDCP & PPI & ours & ours-sel & oracle & DP && base & SDCP & PPI & ours & ours-sel & oracle & DP \\
\midrule
Std & CRIME & $30/500$ & $0.75$ & $1.42$ & $0.78$ & \shortstack{\textbf{0.50}\\(42.9\%)} & \shortstack{\textbf{0.58}\\(27.7\%)} & $0.16$ & $0.86$ && $0.55$ & $0.53$ & $0.53$ & \shortstack{\textbf{0.42}\\(25.0\%)} & \shortstack{\textbf{0.44}\\(19.4\%)} & $0.10$ & $0.86$\\
& BIO & $30/1000$ & $0.53$ & $0.66$ & $0.58$ & \shortstack{\textbf{0.37}\\(35.7\%)} & \shortstack{\textbf{0.49}\\(8.3\%)} & $0.07$ & $\mathit{0.31}^+$ && $0.42$ & $0.39$ & $0.41$ & \shortstack{\textbf{0.32}\\(29.3\%)} & \shortstack{\textbf{0.36}\\(12.3\%)} & $0.14$ & $\mathit{0.31}^+$\\
& STAR & $30/1000$ & $8.78$ & $13.18$ & $10.42$ & \shortstack{\textbf{5.25}\\(48.4\%)} & \shortstack{\textbf{5.65}\\(42.9\%)} & $1.48$ & $\mathit{9.13}^+$ && $6.45$ & $7.87$ & $6.25$ & \shortstack{\textbf{5.90}\\(7.3\%)} & \shortstack{\textbf{5.93}\\(6.7\%)} & $1.42$ & $\mathit{8.80}^+$\\
& DERMA & $30/1000$ & $0.68$ & $0.72$ & $\mathit{0.78}^+$ & \shortstack{\textbf{0.49}\\(30.4\%)} & \shortstack{\textbf{0.46}\\(35.8\%)} & $0.06$ & $\mathit{0.12}^+$ && $0.15$ & $0.22$ & $0.15$ & \shortstack{\textbf{0.14}\\(22.1\%)} & \shortstack{\textbf{0.13}\\(29.3\%)} & $0.09$ & $\mathit{0.09}^+$\\
& TISSUE & $30/1000$ & $0.83$ & $1.23$ & $1.02$ & \shortstack{\textbf{0.73}\\(13.5\%)} & \shortstack{\textbf{0.79}\\(4.8\%)} & $0.12$ & $\mathit{0.07}^+$ && $0.72$ & $1.12$ & $0.97$ & \shortstack{\textbf{0.62}\\(15.4\%)} & \shortstack{\textbf{0.64}\\(11.5\%)} & $0.10$ & $\mathit{0.09}^+$\\
\midrule
Marginal & CRIME & $30/500$ & $0.896$ & $0.904$ & $0.901$ & $0.895$ & $0.898$ & $0.895$ & $0.911$ && $0.896$ & $0.895$ & $0.898$ & $0.903$ & $0.900$ & $0.898$ & $0.909$\\
& BIO & $30/1000$ & $0.902$ & $0.915$ & $0.906$ & $0.930$ & $0.921$ & $0.900$ & $0.956^+$ && $0.896$ & $0.913$ & $0.907$ & $0.916$ & $0.914$ & $0.901$ & $0.959^+$\\
& STAR & $30/1000$ & $0.899$ & $0.911$ & $0.902$ & $0.919$ & $0.918$ & $0.902$ & $0.939^+$ && $0.892$ & $0.905$ & $0.899$ & $0.895$ & $0.895$ & $0.896$ & $0.935^+$\\
& DERMA & $30/1000$ & $0.930$ & $0.915$ & $0.937^+$ & $0.933$ & $0.932$ & $0.900$ & $0.976^+$ && $0.925$ & $0.925$ & $0.931$ & $0.924$ & $0.930$ & $0.901$ & $0.964^+$\\
& TISSUE & $30/1000$ & $0.909$ & $0.926$ & $0.923$ & $0.928$ & $0.923$ & $0.905$ & $0.998^+$ && $0.905$ & $0.917$ & $0.920$ & $0.925$ & $0.920$ & $0.905$ & $0.994^+$\\
\midrule
Size & CRIME & $30/500$ & $3.24$ & $3.75$ & $3.37$ & $\mathbf{3.11}$ & $3.19$ & $3.01$ & $3.46$ && $3.16$ & $\mathbf{3.14}$ & $3.18$ & $3.17$ & $3.15$ & $3.02$ & $3.40$\\
& BIO & $30/1000$ & $\mathbf{1.96}$ & $2.11$ & $2.05$ & $2.13$ & $2.13$ & $1.75$ & $\mathit{2.54}^+$ && $\mathbf{2.01}$ & $2.11$ & $2.08$ & $2.09$ & $2.10$ & $1.93$ & $\mathit{2.49}^+$\\
& STAR & $30/1000$ & $\mathbf{39.84}$ & $44.76$ & $41.89$ & $41.13$ & $41.38$ & $37.60$ & $\mathit{47.28}^+$ && $\mathbf{39.05}$ & $41.40$ & $40.60$ & $39.12$ & $39.11$ & $38.41$ & $\mathit{46.16}^+$\\
& DERMA & $30/1000$ & $2.32$ & $\mathbf{2.15}$ & $\mathit{2.46}^+$ & $2.22$ & $2.20$ & $1.70$ & $\mathit{3.15}^+$ && $2.66$ & $2.69$ & $2.67$ & $\mathbf{2.65}$ & $2.65$ & $2.60$ & $\mathit{2.71}^+$\\
& TISSUE & $30/1000$ & $\mathbf{4.30}$ & $4.83$ & $4.68$ & $4.62$ & $4.51$ & $4.06$ & $\mathit{7.57}^+$ && $\mathbf{4.22}$ & $4.58$ & $4.59$ & $4.57$ & $4.49$ & $4.10$ & $\mathit{7.35}^+$\\
\midrule
Miscoverage & CRIME & $30/500$ & $0.040$ & $0.035$ & $0.039$ & $0.043$ & $0.039$ & $0.040$ & $0.042$ && $0.043$ & $0.045$ & $0.042$ & $0.038$ & $0.039$ & $0.040$ & $0.043$\\
& BIO & $30/1000$ & $0.068$ & $0.072$ & $0.069$ & $0.084$ & $0.077$ & $0.074$ & $0.096$ && $0.074$ & $0.078$ & $0.074$ & $0.079$ & $0.078$ & $0.077$ & $0.096$\\
& STAR & $30/1000$ & $0.030$ & $0.026$ & $0.030$ & $0.034$ & $0.032$ & $0.031$ & $0.043$ && $0.027$ & $0.024$ & $0.027$ & $0.027$ & $0.027$ & $0.028$ & $0.038$\\
& DERMA & $30/1000$ & $0.058$ & $0.062$ & $0.060$ & $0.061$ & $0.061$ & $0.081$ & $0.083$ && $0.037$ & $0.036$ & $0.043$ & $0.037$ & $0.043$ & $0.036$ & $0.075$\\
& TISSUE & $30/1000$ & $0.066$ & $0.066$ & $0.066$ & $0.072$ & $0.071$ & $0.072$ & $0.098$ && $0.072$ & $0.071$ & $0.070$ & $0.072$ & $0.073$ & $0.074$ & $0.094$\\
\bottomrule
\end{tabular}}}
\end{table*}

\begin{theorem}\label{theo:marginal-sel}
Let $S_i=S(X_i,Y_i)$ be exchangeable for $i=1,\ldots,n+1$. We choose $\widehat{\lambda}$ with specified tolerance $\alpha_{\rm tol}>0$ as in Section~\ref{sec:para-sel}. Then
\begin{align*}
  &\mathbb{P}\left(Y_{n+1}\in \widehat{C}_{\rm St\mbox{-}sel}(X_{n+1})\right)\\
  \in&\left[1-\alpha-\alpha_{\rm tol},1-\alpha+\alpha_{\rm tol}+(n+1)^{-1}\right).
\end{align*}
\end{theorem}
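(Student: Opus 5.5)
The plan is a sandwich argument. Whatever $\widehat{\lambda}$ is selected, the threshold $\widehat{q}_{{\rm St\mbox{-}sel}}$ lies in $[q_{\rm L},q_{\rm U}]$ by construction. This holds because $\lambda=0$ is feasible (Appendix~\ref{app:wasserstein-grid}), so $\Lambda_{\rm feas}$ is nonempty and $\widehat{\lambda}=\max\Lambda_{\rm feas}$ is well defined. Since the set $\{y:S(X_{n+1},y)\le q\}$ is monotone in $q$, we get
\[
\{S_{n+1}\le q_{\rm L}\}\subseteq\{Y_{n+1}\in\widehat{C}_{\rm St\mbox{-}sel}(X_{n+1})\}\subseteq\{S_{n+1}\le q_{\rm U}\},
\]
where $S_{n+1}=S(X_{n+1},Y_{n+1})$. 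This reduces the claim to bounding $\mathbb{P}(S_{n+1}\le q_{\rm L})$ from below and $\mathbb{P}(S_{n+1}\le q_{\rm U})$ from above. Both thresholds depend only on the calibration scores $S_1,\dots,S_n$, through empirical quantiles of $\widehat{F}_S^0$. The data-dependent selection of $\lambda$ (which may use $\mathcal{U}_m$ and the source data) therefore disappears from the analysis, and no parametric assumption is needed.

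Next I will treat each bound as a standard split-conformal quantile computation under exchangeability of $S_1,\dots,S_{n+1}$. Write $q_\beta=Q(\beta;\widehat{F}_S^0)$, which is the $\lceil n\beta\rceil$-th order statistic of $S_1,\dots,S_n$ when $n\beta\le n$. The classical rank argument gives
\[
\mathbb{P}(S_{n+1}\le S_{(k)})\ge k/(n+1),
\]
and, when there are no ties, also $\le k/(n+1)$. With $k=\lceil n(1-\alpha-\alpha_{\rm tol})\rceil$, the lower bound becomes $\ge n(1-\alpha-\alpha_{\rm tol})/(n+1)$. Similarly, $k=\lceil n(1-\alpha+\alpha_{\rm tol})\rceil$ gives the upper bound $< \{n(1-\alpha+\alpha_{\rm tol})+1\}/(n+1)$.

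The main obstacle is the bookkeeping at the lower end. The quantiles in the definitions of $q_{\rm L},q_{\rm U}$ use the level $1-\alpha\pm\alpha_{\rm tol}$ on $\widehat{F}_S^0$ without the $(n+1)/n$ inflation, which costs a factor $n/(n+1)$. To match the stated endpoint $1-\alpha-\alpha_{\rm tol}$ exactly, I would define $q_{\rm L}$ through the augmented empirical distribution $(n+1)^{-1}\{\sum_i\delta_{S_i}+\delta_\infty\}$, as in the definition of $\widehat{q}$, and likewise for $q_{\rm U}$. Equivalently, I would use the inflated level $1-\alpha_n\pm\alpha_{\rm tol}$, consistent with how $\lambda=0$ recovers $\widehat{q}$. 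The standard conformal bounds then give $\ge 1-\alpha-\alpha_{\rm tol}$ and $<1-\alpha+\alpha_{\rm tol}+(n+1)^{-1}$.

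Two boundary cases need a short remark. For the upper bound, if the quantile level exceeds the range, $q_{\rm U}=\infty$ and the bound is trivial. The strict upper bound requires continuity of the score distribution (no ties), as in the usual conformal upper bound. If ties are possible, I would either assume continuity of the scores or accept a non-strict inequality.
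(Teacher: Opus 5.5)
Your argument is the same as the paper's. The sandwich $q_{\rm L}\le \widehat{q}_{{\rm St\mbox{-}sel}}\le q_{\rm U}$ removes the data-dependent choice of $\lambda$, and an exchangeability rank argument then handles the two order statistics. Where you depart from the paper, at the $n/(n+1)$ loss in the lower endpoint, you are right and the paper's proof is wrong. It asserts that $k_\beta=\lceil n(1-\beta)\rceil$ implies $k_\beta/(n+1)\ge 1-\beta$, which fails in general: for $n=10$ and $1-\beta=0.88$, $k_\beta=9$ and $9/11\approx 0.818$. With $q_{\rm L}=Q(1-\alpha-\alpha_{\rm tol};\widehat{F}_S^0)$ as defined in Section~\ref{sec:para-sel}, the sandwich only gives $\lceil n(1-\alpha-\alpha_{\rm tol})\rceil/(n+1)$. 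This is essentially attained when the largest feasible $\lambda$ puts the selected threshold just above $q_{\rm L}$. So your redefinition at the inflated level, via the augmented empirical distribution with $\delta_\infty$, is a needed correction of the statement rather than a detour. (Your ``equivalent'' level $1-\alpha_n\pm\alpha_{\rm tol}$ differs from $(1-\alpha\pm\alpha_{\rm tol})(n+1)/n$ by $\alpha_{\rm tol}/n$, but either choice gives the stated endpoints.)

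Your caveat on ties is also warranted. The paper's claim $\mathbb{P}(S_{n+1}\le S_{(k)})\in[k/(n+1),(k+1)/(n+1))$ tacitly needs almost surely distinct scores. Its step $(k_\beta+1)/(n+1)<1-\beta+(n+1)^{-1}$ is also false in general: it requires $\lceil n(1-\beta)\rceil<(n+1)(1-\beta)$, which fails e.g.\ for $n=12$, $\beta=0.08$. Using the exact value $k/(n+1)$ under no ties, as you do, is the clean route.

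One ordering fix: you invoke feasibility of $\lambda=0$ before your redefinition, but with the original $q_{\rm L},q_{\rm U}$ it can fail. For $n=15$, $\alpha=0.1$, $\alpha_{\rm tol}=0.02$ one gets $\widehat{q}=S_{(15)}$ but $q_{\rm U}=S_{(14)}$, which can leave $\widehat{\lambda}$ undefined. Under your inflated definitions, $\lceil (n+1)(1-\alpha-\alpha_{\rm tol})\rceil\le\lceil (n+1)(1-\alpha)\rceil\le\lceil (n+1)(1-\alpha+\alpha_{\rm tol})\rceil$ gives $q_{\rm L}\le\widehat{q}\le q_{\rm U}$ for every $n$. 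So state feasibility after the repair, still taking for granted the appendix's claim that $\widehat{q}_{{\rm St},0}=\widehat{q}$.
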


\section{Numerical Experiments}\label{sec: experiments}

We conduct experiments on five real-world datasets to evaluate the empirical performance of StCP under GLCP-type and CQR-type base methods.
The nominal coverage level is fixed at $1-\alpha=90\%$ throughout.
For each method, we evaluate the prediction set $\widehat{C}(X_{n+1})$ using: (1) marginal coverage $\mathbb{P}(Y_{n+1} \in \widehat{C}(X_{n+1}))$; (2) mean set size $\mathbb{E}\{L(\widehat{C})\}$; (3) test-conditional miscoverage error $\mathbb{E}\{|\mathbb{P}(Y_{n+1} \in \widehat{C}(X_{n+1}) \mid X_{n+1}) - (1-\alpha)|\}$; and (4) set stability quantified by $\sqrt{\mathrm{Var}(L(\widehat{C}))}$. All experiments are repeated 50 times. 
The code for all experiments is available at \url{https://github.com/OswinMin/StCP}.

\subsection{Real Data Experiments}\label{sec:experiments-real}

We evaluate the proposed method on several publicly available regression and classification datasets that have also been studied in prior conformal prediction work \citep{romano2019conformalized,sesia2021conformal,humbert2023one,min2025personalized}. These include physicochemical properties of protein tertiary structure (BIO) \citep{rana2013physicochemical}, communities and crimes (CRIME) \citep{communities_and_crime_unnormalized_211}, Tennessee's Student--Teacher Achievement Ratio dataset (STAR) \citep{achilles2008tennessee}, and two MedMNIST image benchmarks \citep{medmnistv1}: DERMA (Dermatoscope) and TISSUE (Kidney Cortex Microscope).

We first split each dataset into target and source subsets to induce a distribution shift; the detailed splitting strategy is described in Section~\ref{app:real-details} in the appendix. The values of $n$ and $m$ are those reported in Table~\ref{tab:results-real}. For conditional distribution estimation, we use an engression network \citep{shen2024engression} with hidden layers $(50,100,100,50)$, and fine-tune only the $100\times 100$ weight matrix with the number of tunable parameters $k_0=10000$. The penalty term takes the quadratic form
\[
R(F_\theta,F_{\widehat{\theta}})=\|\theta-\widehat{\theta}\|_2^2/k_0,
\]
and we search $\lambda$ over $[0,3000]$. For regression tasks, we use the residual score for $V(x,y)$. For classification tasks, $\widehat{\mu}$ outputs a class-probability vector, and the score is defined as $V(x,y)=1-\widehat{\mu}(x)_y$, where $\widehat{\mu}(x)_y$ is the predicted probability of class $y$.

\begin{figure*}[pt]
\centering
\includegraphics[width=.85\textwidth]{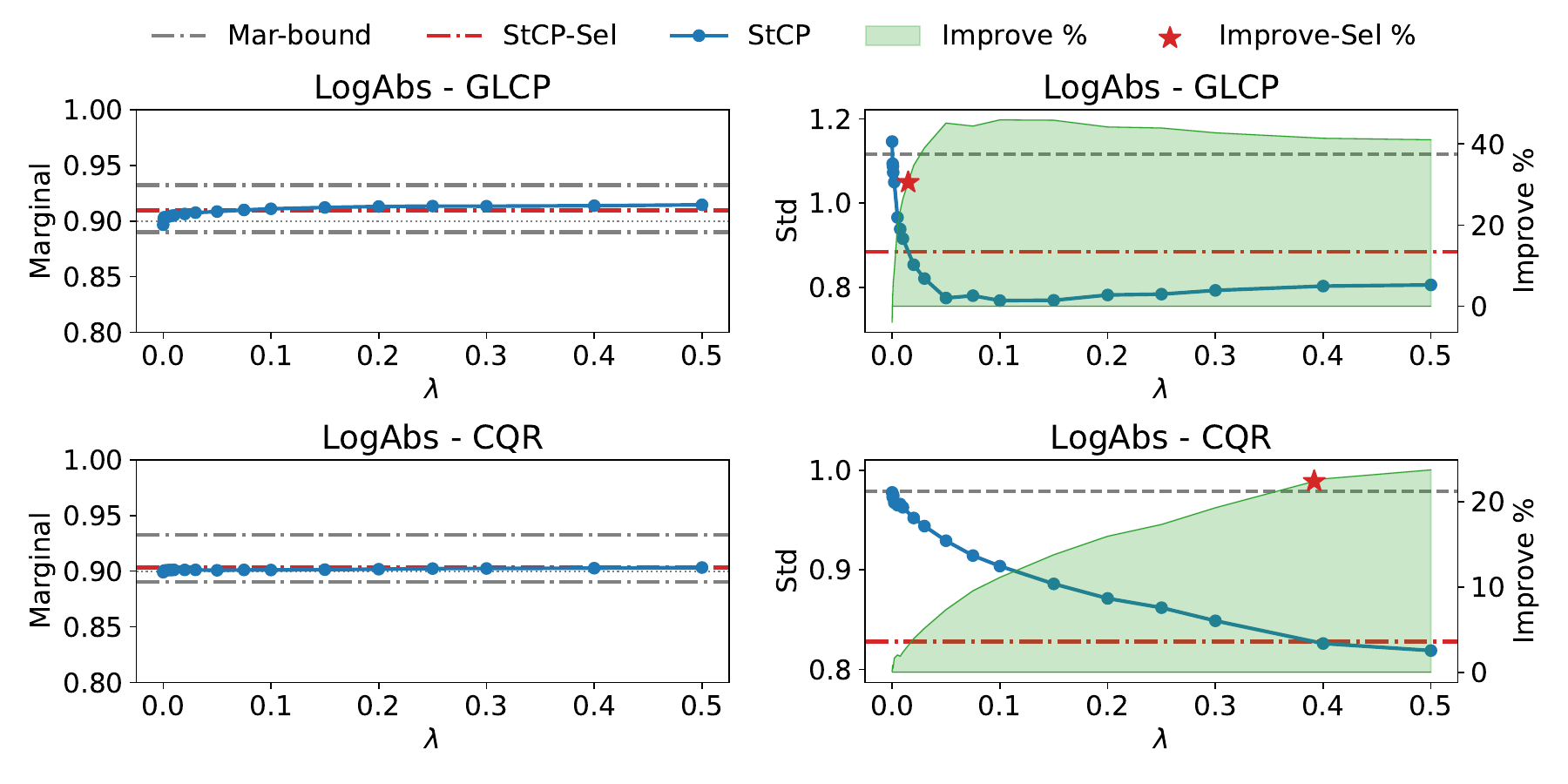}
\captionof{figure}{Sensitivity to $\lambda$ under the LogAbs setting.}
\label{fig:sim-lambda-logabs}
\end{figure*}

Table~\ref{tab:results-real} reports results for both GLCP-type and CQR-type base methods with the following competitors: \textit{base}, SDCP, PPI, ours, ours-sel, oracle, and DP. Here SDCP denotes the method of \citet{wen2025semi}, which estimates the conditional distribution of $S\mid X$ using $\mathcal{L}_n$ and then constructs a debiased estimator of the marginal distribution of $S$ in a PPI-style manner. PPI denotes the counterpart that estimates $S\mid X$ from source data $\mathcal{L}_N^\prime$ and then uses a PPI-style debiased marginal estimator. ``ours'' reports the best Std result over the grid of $\lambda$ while maintaining marginal coverage, whereas ``ours-sel'' uses the data-driven rule in Section~\ref{sec:para-sel} with $\alpha_{\rm tol}=0.02$. ``oracle'' labels unlabeled target data and merges them into calibration before applying the corresponding base method. ``DP'' denotes the direct plug-in variant based on $\widehat{F}_S^1$ without correction.

Our primary objective is to reduce standard deviation while preserving marginal validity. Across both GLCP-type and CQR-type settings, ours consistently delivers lower standard deviation than base, SDCP, and PPI, and ours-sel preserves this improvement pattern while enforcing a more conservative, data-driven choice of $\lambda$. In the Marginal block, ours and ours-sel remain mostly within the acceptable range and avoid the severe upward deviations frequently seen for DP.

Beyond stability, size and miscoverage of ours and ours-sel are generally comparable to those of the corresponding base methods. In a few cases, size is slightly larger for ours, which is consistent with its marginal coverage being slightly higher in those settings; this reflects a coverage--efficiency tradeoff rather than instability. Oracle remains the strongest reference because it uses true labels for the unlabeled target data, and the remaining gap to oracle quantifies room for further improvement under imperfect transfer.

\subsection{Sensitivity to Sample Size and Parameter $\lambda$}\label{sec:num_sensitivity}

We next study how StCP responds to the tuning parameter $\lambda$ and to the amount of labeled calibration data in a synthetic transfer-regression setting. Let $d=5$ and write $\mathbf{1}_d$ for the $d$-dimensional all-one vector. We set $\mu_s=0$ and $\mu_t=\mathbf{1}_d/2\sqrt{d}$, and generate the source and target covariates as $X' \sim N(\mu_s, I_d), X \sim N(\mu_t, I_d)$ and corresponding labels as $Y' = \frac{3}{d}\sum_{j=1}^d X'_j + \varepsilon', Y = \frac{2}{d}\sum_{j=1}^d X_j + \varepsilon$, with the noise terms $\varepsilon'$ and $\varepsilon$ being heteroscedastic and dependent on the covariates:
\[
\varepsilon' \mid X' \sim N\!\left(0, \sigma^2(X';\gamma_s)\right), \varepsilon \mid X \sim N\!\left(0, \sigma^2(X;\gamma_t)\right)
\]
where $\sigma(x;\gamma) = \sqrt{\gamma/d}\,\sum_{j=1}^d \log(1+|x_j|)$ and $\gamma_s=1.2, \gamma_t=1$. 
Thus the synthetic study contains both covariate shift and heteroscedastic label noise, while keeping the target setting analytically transparent. We refer to this setting as `LogAbs', with $(n,m)=(30,500)$ for the $\lambda$ study and $m=500$ with $n\in\{30,100,500\}$ for the sample-size study; the additional `Quad' and `Softplus' settings are reported in Appendix~\ref{app:sim-results}.

\begin{table*}[t]
\centering
\caption{Simulation results under the LogAbs setting with $m=500$ and $n\in\{30,100,500\}$. In the Marginal block, ``$-$'' indicates marginal coverage below $1-\alpha-0.01$, and ``$+$'' indicates marginal coverage above $1-\alpha+1/(n+1)$.}
\label{tab:sim-vary-n-logabs}
{\fontsize{10}{8}\selectfont
\setlength{\tabcolsep}{3.5pt}
\begin{tabular}{ccccccccccc}
\toprule
&&&& \multicolumn{7}{c}{Method}\\
\cmidrule(lr){5-11}
& Setting & $n$ & Model & base & SDCP & PPI & ours & ours-sel & oracle & DP \\
\midrule
Std & LogAbs & $30$ & GLCP & $1.12$ & $\mathit{1.65}^+$ & $1.09$ & $\textbf{0.77}\,(31.2\%)$ & $\textbf{0.88}\,(20.7\%)$ & $0.36$ & $0.85$ \\
 &  &  & CQR & $0.98$ & $0.88$ & $0.94$ & $\textbf{0.82}\,(16.3\%)$ & $\textbf{0.83}\,(15.4\%)$ & $0.31$ & $0.81$ \\
 & LogAbs & $100$ & GLCP & $0.50$ & $\mathit{0.67}^-$ & $0.46$ & $\textbf{0.38}\,(23.2\%)$ & $\mathit{0.38}^+\,(24.4\%)$ & $0.15$ & $\mathit{0.41}^+$ \\
 &  &  & CQR & $0.45$ & $\mathit{0.46}^-$ & $0.47$ & $\textbf{0.38}\,(16.7\%)$ & $\textbf{0.38}\,(15.8\%)$ & $0.14$ & $\mathit{0.38}^+$ \\
 & LogAbs & $500$ & GLCP & $0.25$ & $\mathit{0.21}^-$ & $0.21$ & $\textbf{0.18}\,(26.6\%)$ & $\mathit{0.16}^+\,(33.9\%)$ & $0.14$ & $\mathit{0.21}^+$ \\
 &  &  & CQR & $0.18$ & $\mathit{0.17}^-$ & $0.18$ & $\textbf{0.17}\,(6.3\%)$ & $\textbf{0.17}\,(6.3\%)$ & $0.13$ & $\mathit{0.20}^+$ \\
\midrule
Marginal & LogAbs & $30$ & GLCP & $0.899$ & $\mathit{0.935}^+$ & $0.909$ & $0.911$ & $0.910$ & $0.904$ & $0.921$ \\
 &  &  & CQR & $0.899$ & $0.896$ & $0.905$ & $0.903$ & $0.903$ & $0.899$ & $0.916$ \\
 & LogAbs & $100$ & GLCP & $0.902$ & $\mathit{0.867}^-$ & $0.908$ & $0.910$ & $\mathit{0.910}^+$ & $0.902$ & $\mathit{0.924}^+$ \\
 &  &  & CQR & $0.899$ & $\mathit{0.884}^-$ & $0.902$ & $0.900$ & $0.900$ & $0.900$ & $\mathit{0.920}^+$ \\
 & LogAbs & $500$ & GLCP & $0.900$ & $\mathit{0.873}^-$ & $0.900$ & $0.902$ & $\mathit{0.909}^+$ & $0.902$ & $\mathit{0.920}^+$ \\
 &  &  & CQR & $0.900$ & $\mathit{0.885}^-$ & $0.901$ & $0.899$ & $0.899$ & $0.900$ & $\mathit{0.918}^+$ \\
\midrule
Size & LogAbs & $30$ & GLCP & $\textbf{5.19}$ & $\mathit{6.20}^+$ & $5.39$ & $5.36$ & $5.34$ & $5.04$ & $5.65$ \\
 &  &  & CQR & $5.24$ & $\textbf{5.14}$ & $5.32$ & $5.24$ & $5.24$ & $5.03$ & $5.48$ \\
 & LogAbs & $100$ & GLCP & $\textbf{4.81}$ & $\mathit{4.35}^-$ & $4.93$ & $4.92$ & $\mathit{4.93}^+$ & $4.74$ & $\mathit{5.23}^+$ \\
 &  &  & CQR & $\textbf{4.84}$ & $\mathit{4.63}^-$ & $4.88$ & $\textbf{4.82}$ & $\textbf{4.83}$ & $4.79$ & $\mathit{5.15}^+$ \\
 & LogAbs & $500$ & GLCP & $\textbf{4.63}$ & $\mathit{4.24}^-$ & $4.64$ & $4.65$ & $\mathit{4.78}^+$ & $4.64$ & $\mathit{5.01}^+$ \\
 &  &  & CQR & $\textbf{4.71}$ & $\mathit{4.52}^-$ & $4.72$ & $\textbf{4.69}$ & $\textbf{4.69}$ & $4.71$ & $\mathit{4.98}^+$ \\
\midrule
Miscoverage & LogAbs & $30$ & GLCP & $0.014$ & $0.035$ & $0.017$ & $0.017$ & $0.016$ & $0.015$ & $0.023$ \\
 &  &  & CQR & $0.019$ & $0.020$ & $0.019$ & $0.019$ & $0.019$ & $0.020$ & $0.021$ \\
 & LogAbs & $100$ & GLCP & $0.014$ & $0.035$ & $0.017$ & $0.016$ & $0.016$ & $0.014$ & $0.026$ \\
 &  &  & CQR & $0.021$ & $0.026$ & $0.020$ & $0.021$ & $0.020$ & $0.021$ & $0.025$ \\
 & LogAbs & $500$ & GLCP & $0.015$ & $0.029$ & $0.016$ & $0.015$ & $0.016$ & $0.014$ & $0.023$ \\
 &  &  & CQR & $0.021$ & $0.026$ & $0.021$ & $0.021$ & $0.021$ & $0.021$ & $0.024$ \\
\bottomrule
\end{tabular}}
\end{table*}

Figure~\ref{fig:sim-lambda-logabs} shows that the effect of $\lambda$ is stable and interpretable in this regime. For the GLCP-type base method, increasing $\lambda$ from very small values yields a visible drop in the Std metric while keeping marginal coverage near the nominal level over a broad interval. For the CQR-type variant, the improvement is milder but still persistent, which is consistent with the smaller stability gains already observed in the real-data study. The curves also flatten as $\lambda$ becomes large, indicating that the benefit from stronger source regularization eventually saturates rather than increasing indefinitely.

Table~\ref{tab:sim-vary-n-logabs} examines the effect of the labeled calibration size $n$. The gain is strongest when $n=30$, where calibration noise is most severe: under GLCP, the Std metric decreases from $1.12$ to $0.77$, and under CQR it decreases from $0.98$ to $0.82$, while the marginal coverage remains within an acceptable range. As $n$ increases, the room for variance reduction naturally becomes smaller because the baseline conformal quantile itself becomes more stable. Even so, StCP remains competitive at $n=500$, suggesting that the method is particularly valuable in low-label regimes without introducing an obvious downside once more calibration labels are available.

Also, in this experiment a range of marginal coverage values around the nominal level remains acceptable, but even within that acceptable interval the prediction-set size is still noticeably affected by how conservative the method is. Consequently, in some cases ours or ours-sel has a slightly larger size than base simply because it attains a slightly larger marginal coverage. The CQR results illustrate this clearly: when $n=30$, ours and ours-sel achieve essentially the same size as base ($5.24$) while attaining a slightly larger marginal coverage ($0.903$ versus $0.899$); when $n=100$, they attain a slightly larger marginal coverage ($0.900$ versus $0.899$) together with a slightly smaller size ($4.82/4.83$ versus $4.84$). Meanwhile, the miscoverage values remain fully comparable to those of base and are nearly identical in these two settings, so the observed size differences are best understood as reflecting small shifts in marginal conservativeness rather than a meaningful deterioration in predictive quality.

\section{Conclusion}\label{sec:conclusion}

This paper introduces \textit{set stability} as a reliability criterion for addressing the instability of conformal prediction with limited calibration data. We propose Stable Conformal Prediction (StCP), a transfer learning approach that stabilizes prediction sets by leveraging labeled source data together with unlabeled target data. StCP can be seamlessly integrated with localized methods, yielding Stable Localized Conformal Prediction (SLCP). Theoretically, StCP preserves valid coverage while reducing variability in set size; empirically, it produces more concentrated prediction sets.

\section{Broader Impact}
This paper presents work whose goal is to advance the field of machine learning. There are many potential societal consequences of our work, none of which we feel must be specifically highlighted here.


\bibliography{ref}
\bibliographystyle{icml2026}

\newpage
\appendix
\onecolumn

\section{Details and Verifications of Assumptions}

\subsection{Data-driven Parameter Selection Algorithm}

Here we provide the corresponding algorithm for the data-driven parameter selection described in Section~\ref{sec:para-sel}.

\begin{algorithm}[h]
  \caption{Stable conformal prediction with data-driven parameter selection}
  \label{alg:lambda-sel}
  \begin{algorithmic}
    \STATE {\bfseries Input:} calibration scores $\{S_i\}_{i=1}^n$, unlabeled data $\mathcal{U}_m$, candidate grid $\Lambda$, tolerance $\alpha_{\rm tol}$, and nominal level $1-\alpha$.
    \vspace{8pt}
    \STATE 1. Compute $\widehat{F}_S^0(s)=n^{-1}\sum_{i=1}^n \ind(S_i\le s)$.
    \vspace{4pt}
    \STATE 2. Set $q_{\rm L}=Q(1-\alpha-\alpha_{\rm tol};\widehat{F}_S^0)$ and $q_{\rm U}=Q(1-\alpha+\alpha_{\rm tol};\widehat{F}_S^0)$.
    \vspace{4pt}
    \STATE 3. For each $\lambda\in\Lambda$, compute $\widehat{q}_{{\rm St},\lambda}$ by the StCP procedure with tuning parameter $\lambda$.
    \vspace{4pt}
    \STATE 4. Form $\Lambda_{\rm feas}=\{\lambda\in\Lambda:\widehat{q}_{{\rm St},\lambda}\in[q_{\rm L},q_{\rm U}]\}$.
    \vspace{4pt}
    \STATE 5. Select $\widehat{\lambda}=\max \Lambda_{\rm feas}$ and set $\widehat{q}_{{\rm St\mbox{-}sel}}=\widehat{q}_{{\rm St},\widehat{\lambda}}$.
    \vspace{4pt}
    \STATE 6. Construct $\widehat{C}_{\rm St\mbox{-}sel}(X_{n+1})=\{y:S(X_{n+1},y)\le \widehat{q}_{{\rm St\mbox{-}sel}}\}$.
    \vspace{4pt}
    \STATE \textbf{Return:} $\widehat{C}_{\rm St\mbox{-}sel}$.
  \end{algorithmic}
\end{algorithm}

\subsection{Validation of Assumption~\ref{ass:param}}

Here we provide a brief justification for Assumption~\ref{ass:param} under standard regularity conditions used in parametric conditional distribution modeling.

\paragraph{(i) Bounded parameter space.}
In practice, optimization is carried out over a bounded region (e.g., by constrained training, explicit regularization, or bounded initialization neighborhoods), so it is natural to work with a compact parameter set $\Theta$.

\paragraph{(ii) Uniformly bounded support.}
Many score constructions in conformal prediction are based on transformed residual-type quantities with a practically truncated range, and distributional models are often built on compact domains for both theory and implementation. Under such settings, the support of $F(\cdot\mid x;\theta)$ is uniformly bounded in $(x,\theta)$.

\paragraph{(iii) Lipschitz continuity in $\theta$.}
If $F(s\mid x;\theta)$ is continuously differentiable in $\theta$ and $\Theta$ is compact, then
\[
  \sup_{s,x,\theta}\|\partial_\theta F(s\mid x;\theta)\|_2<\infty,
\]
which implies a global Lipschitz bound in $\theta$ by the mean-value theorem. This is standard for smooth parametric families and neural-network parameterizations with bounded weights and Lipschitz activations. This property underlies the stability analysis in Section 3 of \citet{hardt2016train}, which relies on the loss (and thus the model output) being Lipschitz in the parameters.

\paragraph{(iv) Density bounded away from zero and infinity and bounded first derivatives.}
Assume $f(s\mid x;\theta)$ is continuous in $(s,x,\theta)$ on a compact domain and strictly positive. Then compactness yields finite constants
\[
  0<\underline{L}_F\le f(s\mid x;\theta)\le \overline{L}_F<\infty
\]
uniformly. If, in addition, $f$ is continuously differentiable in $(s,\theta)$ on the same compact domain, then $\partial_\theta f$ and $\partial_s f$ are uniformly bounded. These conditions are standard in localized or distributional conformal analyses; see, e.g., \citet[Assumption 1]{guan2023localized}.

Overall, Assumption~\ref{ass:param} is satisfied by a broad class of smooth parametric conditional distribution estimators under compact-domain modeling and routine boundedness regularization.

\subsection{Interpretation of the constant in the marginal coverage bound}\label{app:constant-C}

The constant $C$ in Theorem~\ref{theo:marginal} collects the regularity constants that appear when translating perturbations of the conditional score distribution into perturbations of the target quantile and hence into marginal coverage error. Under Assumption~\ref{ass:param}, it can be taken proportional to three main ingredients: (i) a norm bound on the parameter space $\Theta$, (ii) the reciprocal of the lower bound of the conditional density $f(s\mid x;\theta)$, and (iii) the Lipschitz constant of $F(s\mid x;\theta)$ with respect to $\theta$.

A larger parameter-space bound or a larger Lipschitz constant means that small perturbations in the fitted score model can induce larger changes in the conditional distribution estimate, which in turn amplifies the deviation of the aligned quantile estimator. Meanwhile, when the density lower bound is very small, the quantile map becomes ill-conditioned: even a small perturbation in the marginal CDF may lead to a comparatively large shift in the corresponding quantile. Therefore, the coverage deviation bound in Theorem~\ref{theo:marginal} becomes looser in poorly conditioned regimes and tighter when the model class is well regularized and the score density near the target quantile is bounded away from zero.

\subsection{Validation of Assumption~\ref{ass:variance-highlevel}}\label{app:variance-highlevel-verification}

We briefly discuss why Assumption~\ref{ass:variance-highlevel} is natural in smooth finite-dimensional parametric models.

\paragraph{Convexity and curvature.}
The convexity of $\Theta$ is a standard technical device ensuring that the line segment joining $\theta_\lambda$ and $\widetilde{\theta}$ remains inside the parameter space, so the mean-value expansion in the proof of Theorem~\ref{theo:reduced-variance} is legitimate. The lower Hessian bound
\[
  \nabla_{\theta\theta}d\!\left(F_S,F_S^1(\cdot;F_\theta)\right)\succeq c_d I
\]
is exactly the population curvature requirement used to control the inverse Hessian of $J_\lambda$ after adding the quadratic regularizer.

\paragraph{Pointwise Hessian consistency.}
Although it is no longer stated as part of Assumption~\ref{ass:variance-highlevel}, the proof uses the pointwise relation
\[
  \left\|\nabla_{\theta\theta}\widehat{J}_\lambda(\theta_\lambda)-\nabla_{\theta\theta}J_\lambda(\theta_\lambda)\right\|_{\mathrm{op}}=o_p(1).
\]
For $d(\cdot,\cdot)=W_2^2(\cdot,\cdot)$,
\[
  \nabla_{\theta\theta}\widehat{J}_\lambda(\theta)
  =2\int_0^1\!\left[\nabla_\theta \widehat{q}^{1}(u;\theta)\nabla_\theta \widehat{q}^{1}(u;\theta)^\top
  +\bigl\{\widehat{q}^{1}(u;\theta)-\widehat{q}^{0}(u)\bigr\}\nabla_{\theta\theta}\widehat{q}^{1}(u;\theta)\right]du+2\lambda I,
\]
with the analogous population formula for $J_\lambda$. At the fixed optimizer $\theta_\lambda$, each Hessian entry is therefore a finite sum of integrals involving pointwise differences of $\widehat{q}^{0}-q^{0}$, $\widehat{q}^{1}-q^{1}$, $\nabla_\theta\widehat{q}^{1}-\nabla_\theta q^{1}$, and $\nabla_{\theta\theta}\widehat{q}^{1}-\nabla_{\theta\theta}q^{1}$. The first three terms are handled exactly as in the main proof. The last term follows by differentiating the quantile identity $F_S^1(q^{1}(u;\theta);F_\theta)=u$ once more: under the same local smoothness that gives bounded $\partial_\theta F$, $\partial_{\theta\theta}F$, $\partial_\theta f$, and $\partial_s f$, the second derivative $\nabla_{\theta\theta}\widehat{q}^{1}(u;\theta_\lambda)$ is again a ratio of empirical averages of bounded quantities, so ordinary pointwise laws of large numbers imply entrywise $o_p(1)$. Since the parameter dimension is fixed, entrywise convergence implies the operator-norm statement above. This remains a fixed-point argument and does not require any uniform control over the whole parameter space.

\paragraph{When these local conditions hold.}
The pointwise Hessian argument above is valid, for example, if $F(s\mid x;\theta)$ is twice continuously differentiable in $\theta$ and continuously differentiable in $s$ on a compact neighborhood of $\theta_\lambda$, with the relevant derivatives locally Lipschitz and uniformly bounded. In particular, bounded $\partial_\theta F$, $\partial_{\theta\theta}F$, $\partial_\theta f$, and $\partial_s f$ are enough to justify the pointwise Hessian consistency discussed above. This covers the usual smooth generalized linear, location-scale, and bounded-weight neural-network distributional models used in conformal prediction.

\section{Additional Lemmas and Proof of Lemmas and Theorems}

\subsection{Preliminary Lemmas}

\begin{lemma}\label{lemma:quantile-bound}
For any CDFs $F_1(s)$ and $F_2(s)$ both supported on $[a,b]$, if $\sup_{s\in[a,b]}|F_1(s)-F_2(s)|\leq \varepsilon$, and there exists $i\in\{1,2\}$ such that the density of $F_i(s)$ is lower bounded by $\underline{L}_F>0$, then for any $\alpha\in(0,1)$, we have
\[
  \left|Q(1-\alpha;F_1)-Q(1-\alpha;F_2)\right|\leq \varepsilon/\underline{L}_F.
\]
\end{lemma}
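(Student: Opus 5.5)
Without loss of generality, assume $F_1$ is the CDF whose density is bounded below by $\underline{L}_F$. The symmetric case follows by swapping the roles of $F_1$ and $F_2$, because the hypothesis $\sup_s|F_1-F_2|\le\varepsilon$ is symmetric. Write $q_1=Q(1-\alpha;F_1)$ and $q_2=Q(1-\alpha;F_2)$, with $Q(u;F)=\inf\{s:F(s)\ge u\}$. Since $F_1$ has a density bounded below on $[a,b]$, it is continuous and strictly increasing there, with $F_1(a)=0$ and $F_1(b)=1$. Hence $q_1\in(a,b)$ satisfies $F_1(q_1)=1-\alpha$ exactly. The key quantitative fact is the lower growth bound $|F_1(s)-F_1(t)|\ge \underline{L}_F|s-t|$ for $s,t\in[a,b]$, which we get by integrating the density.

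We bound the two one-sided deviations separately.

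\emph{Upper side.} Set $t=\min(q_1+\varepsilon/\underline{L}_F,\,b)$. If $t=b$, then $F_2(b)=1\ge 1-\alpha$. Otherwise the growth bound gives $F_1(t)\ge 1-\alpha+\varepsilon$, and so $F_2(t)\ge F_1(t)-\varepsilon\ge 1-\alpha$. In either case the definition of the quantile yields $q_2\le t\le q_1+\varepsilon/\underline{L}_F$.

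\emph{Lower side.} Take any $s<q_1-\varepsilon/\underline{L}_F$ with $s\ge a$. The growth bound gives $F_1(s)\le F_1(q_1)-\underline{L}_F(q_1-s)<1-\alpha-\varepsilon$, and hence $F_2(s)\le F_1(s)+\varepsilon<1-\alpha$. For $s<a$ we have $F_2(s)=0<1-\alpha$, since $F_2$ is supported on $[a,b]$ and $\alpha<1$. So no $s<q_1-\varepsilon/\underline{L}_F$ lies in $\{s:F_2(s)\ge 1-\alpha\}$, which gives $q_2\ge q_1-\varepsilon/\underline{L}_F$.

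Combining the two sides gives $|q_1-q_2|\le\varepsilon/\underline{L}_F$. The main thing to check carefully is the boundary case, where $q_1\pm\varepsilon/\underline{L}_F$ falls outside $[a,b]$. There the growth bound on $F_1$ is unavailable, and we fall back on the support assumption for $F_2$: $F_2(s)=0$ for $s<a$ and $F_2(b)=1$, as used above. We also need the left-continuity and infimum conventions of $Q$ to work with the nonstrict inequalities. The argument uses only one-sided monotonicity of $F_2$, so $F_2$ need not have a density.
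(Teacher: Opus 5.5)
Your proof is correct and follows essentially the same route as the paper. The paper sandwiches $q_2$ between $Q(1-\alpha-\varepsilon;F_1)$ and $Q(1-\alpha+\varepsilon;F_1)$ using the sup-norm bound, then applies the $(1/\underline{L}_F)$-Lipschitz property of $F_1$'s quantile map; your two one-sided bounds are this argument with the Lipschitz step inlined. Your handling of the boundary cases via the support of $F_2$, and your strict inequality on the lower side, are slightly more careful than the paper's clipping remark and its stated implication $F_1(s)\le u-\varepsilon\Rightarrow F_2(s)<u$, which as written only yields $F_2(s)\le u$.
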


\subsection{Proof of Lemma~\ref{lemma:quantile-bound}}

\begin{proof}
Let $u=1-\alpha\in(0,1)$, and define the left-continuous quantiles
\[
  q_1=Q(u;F_1)=\inf\{s:F_1(s)\ge u\},
  \qquad
  q_2=Q(u;F_2)=\inf\{s:F_2(s)\ge u\}.
\]
Assume first that $F_1$ has density $f_1$ with $f_1(s)\ge \underline{L}_F>0$ on $[a,b]$. Then $F_1$ is strictly increasing, and for any $x<y$,
\[
  F_1(y)-F_1(x)=\int_x^y f_1(t)\,dt\ge \underline{L}_F(y-x).
\]
Hence its inverse quantile map is $(1/\underline{L}_F)$-Lipschitz:
\[
  |Q(v_1;F_1)-Q(v_2;F_1)|\le |v_1-v_2|/\underline{L}_F,
  \qquad v_1,v_2\in(0,1).
\]
Since $\sup_s|F_1(s)-F_2(s)|\le \varepsilon$, for any $s$ we have
\[
  F_1(s)\le u-\varepsilon \Rightarrow F_2(s)<u,
  \qquad
  F_1(s)\ge u+\varepsilon \Rightarrow F_2(s)\ge u.
\]
By the definition of $q_2=\inf\{s:F_2(s)\ge u\}$, this implies
\[
  Q(u-\varepsilon;F_1)\le q_2\le Q(u+\varepsilon;F_1),
\]
where boundary values are understood with clipping to $[0,1]$ if needed. Therefore,
\[
  |q_2-q_1|
  \le \max\!\left\{|Q(u+\varepsilon;F_1)-Q(u;F_1)|,\; |Q(u;F_1)-Q(u-\varepsilon;F_1)|\right\}
  \le \varepsilon/\underline{L}_F.
\]

If instead $F_2$ has density lower bounded by $\underline{L}_F$, the same argument with the roles of $(F_1,q_1)$ and $(F_2,q_2)$ exchanged yields the identical bound. Thus
\[
  \left|Q(1-\alpha;F_1)-Q(1-\alpha;F_2)\right|\le \varepsilon/\underline{L}_F.
\]
\end{proof}

\subsection{Proof of Theorem~\ref{theo:marginal}}

\begin{proof}
For notation simplicity, for any $\theta\in\Theta$ define
\[
    F_\theta=F(s\mid x;\theta),
    \qquad
    \widehat{F}_S^1(s;F_\theta)=m^{-1}\sum_{j=1}^m F(s\mid \widetilde{X}_j;\theta).
\]
Denote the quantiles
\[
  \widehat{q}_0=Q\left(1-\alpha_n;\widehat{F}_S^1(\cdot;F_{\widehat{\theta}})\right),
  \quad
  \widehat{q}=Q\left(1-\alpha_n;\widehat{F}_S^0\right),
  \quad
  \widehat{q}_{\rm St}=Q\left(1-\alpha_n;\widehat{F}_S^1(\cdot;F_{\widetilde{\theta}})\right),
\]
and $\widehat{Q}(\theta)=Q\left(1-\alpha_n;\widehat{F}_S^1(\cdot;F_\theta)\right)$. Since $\widehat{q}_{\rm St}$ is independent of $(X_{n+1},Y_{n+1})$,
\[
  \mathbb{P}\left(Y_{n+1}\in\widehat{C}_{\rm St}(X_{n+1})\right)
  =\mathbb{E}\{F_S(\widehat{q}_{\rm St})\}.
\]
By Assumption~\ref{ass:param}(iii), there exists $L_\theta>0$ such that
\[
  \left|F(s\mid x;\theta_1)-F(s\mid x;\theta_2)\right|
  \le L_\theta\|\theta_1-\theta_2\|_2,
\]
hence
\begin{align*}
  \left|\widehat{F}_S^1(s;F_{\theta_1})-\widehat{F}_S^1(s;F_{\theta_2})\right|
  &\le m^{-1}\sum_{j=1}^m \left|F(s\mid \widetilde{X}_j;\theta_1)-F(s\mid \widetilde{X}_j;\theta_2)\right|\\
  &\le L_\theta\|\theta_1-\theta_2\|_2.
\end{align*}
By Assumption~\ref{ass:param}(iv), the density of $F(s\mid x;\theta)$ is lower bounded by $\underline{L}_F>0$. Since $\widehat{F}_S^1(\cdot;F_\theta)$ is a linear combination of $F(s\mid \widetilde{X}_j;\theta)$, its density is also lower bounded by $\underline{L}_F>0$. By Lemma~\ref{lemma:quantile-bound}, $\widehat{Q}(\theta)$ is Lipschitz in $\theta$ with constant $L_\Theta=L_\theta/\underline{L}_F$.

Also by Assumption~\ref{ass:param}(iv), the density of $F(s\mid x;\theta)$ is upper bounded by $\overline{L}_F<\infty$. Since $F_S(s)=\mathbb{E}\{F(s\mid X;\theta^*)\}$ is a linear integral of $F(s\mid X;\theta^*)$, its density is also upper bounded by $\overline{L}_F<\infty$. Let $\overline{L}_F$ be the Lipschitz constant of $F_S$. Then
\begin{align}
  \left|\mathbb{E}\{F_S(\widehat{q}_{\rm St})\}-\mathbb{E}\{F_S(\widehat{q}_0)\}\right|
  &\leq \overline{L}_F\mathbb{E}(|\widehat{q}_{\rm St}-\widehat{q}_0|)
  \leq L_\Theta \overline{L}_F\mathbb{E}(\|\widetilde{\theta}-\widehat{\theta}\|_2),
\label{eq:theo1-case1}\\
  \left|\mathbb{E}\{F_S(\widehat{q}_{\rm St})\}-\mathbb{E}\{F_S(\widehat{q})\}\right|
  &\leq \overline{L}_F\mathbb{E}(|\widehat{q}_{\rm St}-\widehat{q}|).
\label{eq:theo1-case2}
\end{align}

Define $d_0$ as the $\infty$-Wasserstein distance and $d(\cdot,\cdot)=d_0^2(\cdot,\cdot)$. Let
\[
  \widehat{J}_\lambda(\theta)=d\left(\widehat{F}_S^0(\cdot),\widehat{F}_S^1(\cdot;F_\theta)\right)+\lambda\|\theta-\widehat{\theta}\|_2^2,
\]
with
\[
  \widetilde{\theta}=\arg\min_{\theta\in\Theta}\widehat{J}_\lambda(\theta),
  \qquad
  \widetilde{\theta}_0=\arg\min_{\theta\in\Theta}\widehat{J}_0(\theta).
\]
Then $\widehat{J}_\lambda(\widetilde{\theta})\leq \widehat{J}_\lambda(\widehat{\theta})\wedge\widehat{J}_\lambda(\widetilde{\theta}_0)$. By Assumption~\ref{ass:param}(ii), the support diameter of $\widehat{F}_S^0$ and $\widehat{F}_S^1(\cdot;F_\theta)$ is bounded by $M_\Theta<\infty$.

\noindent\textbf{Case 1:} $\widehat{J}_\lambda(\widetilde{\theta})\leq \widehat{J}_\lambda(\widehat{\theta})$ implies
\[
  \sqrt{\lambda}\|\widetilde{\theta}-\widehat{\theta}\|_2
  \leq d_0\left(\widehat{F}_S^0(\cdot),\widehat{F}_S^1(\cdot;F_{\widehat{\theta}})\right)\le M_\Theta.
\]
Thus by \eqref{eq:theo1-case1},
\[
  \left|\mathbb{E}\{F_S(\widehat{q}_{\rm St})\}-\mathbb{E}\{F_S(\widehat{q}_0)\}\right|
  \le L_\Theta \overline{L}_F M_\Theta \lambda^{-1/2}.
\]
By the definition of $\delta_S$ and the Lipschitz property of $F_S$,
\[
  \left|\mathbb{E}\{F_S(\widehat{q}_0)\}-(1-\alpha)\right|
  \le \overline{L}_F\delta_S.
\]
Therefore
\begin{equation}
  \left|\mathbb{P}\left(Y_{n+1}\in\widehat{C}_{\rm St}(X_{n+1})\right)-(1-\alpha)\right|
  \le \overline{L}_F\delta_S+L_\Theta \overline{L}_F M_\Theta \lambda^{-1/2}+n^{-1}.
\label{eq:theo1-res1}
\end{equation}

\noindent\textbf{Case 2:} $\widehat{J}_\lambda(\widetilde{\theta})\leq \widehat{J}_\lambda(\widetilde{\theta}_0)$ implies
\begin{align*}
  d_0^2\left(\widehat{F}_S^0(\cdot),\widehat{F}_S^1(\cdot;F_{\widetilde{\theta}})\right)
  &\leq d_0^2\left(\widehat{F}_S^0(\cdot),\widehat{F}_S^1(\cdot;F_{\widetilde{\theta}_0})\right)+\lambda\|\widetilde{\theta}_0-\widehat{\theta}\|_2^2\\
  &\le \epsilon^2+C_\Theta\lambda\\
  &\le \left(\epsilon+C_\Theta^{1/2}\lambda^{1/2}\right)^2,
\end{align*}
where $C_\Theta$ is the diameter bound of $\Theta$ and we use
\[
  \min_{\theta\in\Theta} d\left(\widehat{F}_S^0,\widehat{F}_S^1(\cdot;F_\theta)\right)\le\epsilon^2.
\]
Since $d_0$ controls quantile differences,
\[
  |\widehat{q}_{\rm St}-\widehat{q}|
  \le d_0\left(\widehat{F}_S^0(\cdot),\widehat{F}_S^1(\cdot;F_{\widetilde{\theta}})\right)
  \le \epsilon+C_\Theta^{1/2}\lambda^{1/2}.
\]
By \eqref{eq:theo1-case2},
\[
  \left|\mathbb{E}\{F_S(\widehat{q}_{\rm St})\}-\mathbb{E}\{F_S(\widehat{q})\}\right|
  \le \overline{L}_F\epsilon+\overline{L}_F C_\Theta^{1/2}\lambda^{1/2}.
\]
Since $\left|\mathbb{E}\{F_S(\widehat{q})\}-(1-\alpha)\right|\le n^{-1}$,
\begin{equation}
  \left|\mathbb{P}\left(Y_{n+1}\in\widehat{C}_{\rm St}(X_{n+1})\right)-(1-\alpha)\right|
  \le \overline{L}_F\epsilon+2\overline{L}_F C_\Theta^{1/2}\lambda^{1/2}+n^{-1}.
\label{eq:theo1-res2}
\end{equation}

Combining \eqref{eq:theo1-res1} and \eqref{eq:theo1-res2}, there exists $C>0$ such that
\[
  \left|\mathbb{P}\left(Y_{n+1}\in\widehat{C}_{\rm St}(X_{n+1})\right)-(1-\alpha)\right|
  \le C\min(\epsilon+\lambda^{1/2}+n^{-1},\delta_S+\lambda^{-1/2}+n^{-1}).
\]
\end{proof}

\subsection{Proof of Lemma~\ref{lemma:stability-variance}}

\begin{proof}
By assumption, conditional on $\mathcal{D}_{\rm aux}$ there exists a deterministic map $L_0$ such that
\[
  L(\widehat{C})=L_0(\widehat{q}),
  \qquad
  |L_0(q_1)-L_0(q_2)|\le C_L|q_1-q_2|.
\]
Let $\widehat{q}'$ be an i.i.d.\ copy of $\widehat{q}$ conditional on $\mathcal{D}_{\rm aux}$. Then
\begin{align*}
  2\mathrm{Var}(L(\widehat{C})\mid\mathcal{D}_{\rm aux})
  &=\mathbb{E}\left[\left\{L_0(\widehat{q})-L_0(\widehat{q}')\right\}^2\mid\mathcal{D}_{\rm aux}\right]\\
  &\le C_L^2\mathbb{E}\left[(\widehat{q}-\widehat{q}')^2\mid\mathcal{D}_{\rm aux}\right]\\
  &=2C_L^2\mathrm{Var}(\widehat{q}\mid\mathcal{D}_{\rm aux}),
\end{align*}
which yields
\[
  \mathrm{Var}(L(\widehat{C})\mid\mathcal{D}_{\rm aux})
  =O(\mathrm{Var}(\widehat{q}\mid\mathcal{D}_{\rm aux})).
\]
\end{proof}

\subsection{Proof of Theorem~\ref{theo:reduced-variance}}

\begin{proof}
Throughout this proof, all variances and expectations are conditional on $\mathcal{D}_{\rm aux}$, and we suppress this conditioning for brevity.

By Lemma~\ref{lemma:stability-variance}, there exists $C_L>0$ such that
\[
  \mathrm{Var}(L(\widehat{C}))\le C_L^2\mathrm{Var}(\widehat{q}),
  \qquad
  \mathrm{Var}(L(\widehat{C}_{\rm St}))\le C_L^2\mathrm{Var}(\widehat{q}_{\rm St}).
\]
Therefore it is enough to bound $\mathrm{Var}(\widehat{q})$ and $\mathrm{Var}(\widehat{q}_{\rm St})$.

\noindent\textbf{Case 1: bound for $\mathrm{Var}(\widehat{q})$.}
Since the density of $F_S$ is lower bounded by $\underline{L}_F$, the quantile map is $\underline{L}_F^{-1}$-Lipschitz:
\[
  |Q(u_1;F_S)-Q(u_2;F_S)|\le \underline{L}_F^{-1}|u_1-u_2|,
  \qquad u_1,u_2\in(0,1).
\]
Let $\widehat{q}=Q(1-\alpha_n;\widehat{F}_S^0)$ and define $U=F_S(\widehat{q})$.
Under continuity of $F_S$ and the density lower bound in Assumption~\ref{ass:param}(iv),
$U$ is an order statistic of $n$ i.i.d.\ uniforms, hence $U\sim\mathrm{Beta}(k,n+1-k)$ for some $k\in\{1,\ldots,n\}$ and $\mathrm{Var}(U)=O(n^{-1})$. Thus
\[
  \mathrm{Var}(\widehat{q})=\mathrm{Var}(Q(U;F_S))
  \le \underline{L}_F^{-2}\mathrm{Var}(U)
  =O(n^{-1}),
\]
and consequently $\mathrm{Var}(L(\widehat{C}))=O(n^{-1})$.

\noindent\textbf{Case 2: bound for $\mathrm{Var}(\widehat{q}_{\rm St})$.}
Define
\[
  \widehat{Q}(\theta)=Q(1-\alpha_n;\widehat{F}_S^1(\cdot;F_\theta)),
  \qquad
  \widetilde{Q}(\theta)=Q(1-\alpha_n;F_S^1(\cdot;F_\theta)),
\]
\begin{align*}
  \widehat{q}^{0}(u)&=Q(u;\widehat{F}_S^0),
  &
  q^{0}(u)&=Q(u;F_S),\\
  \widehat{q}^{1}(u;\theta)&=Q(u;\widehat{F}_S^1(\cdot;F_\theta)),
  &
  q^{1}(u;\theta)&=Q(u;F_S^1(\cdot;F_\theta)).
\end{align*}
where
\[
  F_S^1(s;F_\theta)=\mathbb{E}\{F(s\mid X;\theta)\},
\]
whose expectation is taken over $X\sim P_X$. Let
\[
  \widehat{J}_\lambda(\theta)=d\!\left(\widehat{F}_S^0,\widehat{F}_S^1(\cdot;F_\theta)\right)+\lambda\|\theta-\widehat{\theta}\|_2^2,
  \qquad
  J_\lambda(\theta)=d\!\left(F_S,F_S^1(\cdot;F_\theta)\right)+\lambda\|\theta-\widehat{\theta}\|_2^2.
\]
Therefore, define
\[
  \widehat{q}_{\rm St}=\widehat{Q}(\widetilde{\theta}),
  \qquad
  \widetilde{q}_{\rm St}=\widetilde{Q}(\widetilde{\theta}),
  \qquad
  \theta_\lambda=\arg\min_{\theta\in\Theta}J_\lambda(\theta).
\]
Let $\widehat{q}_{\rm St}'$ and $\widetilde{q}_{\rm St}'$ be i.i.d.\ copies of $\widehat{q}_{\rm St}$ and $\widetilde{q}_{\rm St}$, respectively. Then
\begin{align}
  2\mathrm{Var}(\widehat{q}_{\rm St})
  &=\mathbb{E}(\widehat{q}_{\rm St}-\widehat{q}_{\rm St}^\prime)^2 \notag\\
  &\leq 3\mathbb{E}(\widetilde{q}_{\rm St}-\widetilde{q}_{\rm St}^\prime)^2+6\mathbb{E}(\widetilde{q}_{\rm St}-\widehat{q}_{\rm St})^2.
\label{eq:var-decomp-new}
\end{align}

\textbf{Step 1 (control of $\mathbb{E}(\widetilde{q}_{\rm St}-\widetilde{q}_{\rm St}')^2$).}
By the same quantile-Lipschitz argument used in Theorem~\ref{theo:marginal} derived from Lemma~\ref{lemma:quantile-bound}, there exists $L_\Theta>0$ such that
\[
  |\widetilde{Q}(\theta_1)-\widetilde{Q}(\theta_2)|\le L_\Theta\|\theta_1-\theta_2\|_2.
\]
Hence
\[
  \mathbb{E}(\widetilde{q}_{\rm St}-\widetilde{q}_{\rm St}')^2
  =\mathbb{E}|\widetilde{Q}(\widetilde{\theta})-\widetilde{Q}(\widetilde{\theta}')|^2
  \le L_\Theta^2\mathbb{E}\|\widetilde{\theta}-\widetilde{\theta}'\|_2^2
  =2L_\Theta^2\mathrm{tr}\{\mathrm{Cov}(\widetilde{\theta})\}.
\]
It remains to bound $\mathrm{tr}\{\mathrm{Cov}(\widetilde{\theta})\}$. Define score maps
\[
  \widehat{\Psi}(\theta)=\nabla_\theta \widehat{J}_\lambda(\theta),
  \qquad
  \Psi(\theta)=\nabla_\theta J_\lambda(\theta).
\]
Define
\[
  \widehat{H}_\lambda(\theta)=\nabla_{\theta\theta}\widehat{J}_\lambda(\theta),
  \qquad
  H_\lambda(\theta)=\nabla_{\theta\theta}J_\lambda(\theta).
\]
A first-order Taylor expansion of $\widehat{\Psi}$ at $\theta_\lambda$ gives
\[
0=\widehat{\Psi}(\widetilde{\theta})
=\widehat{\Psi}(\theta_\lambda)+\widehat{H}_\lambda(\theta_\lambda)(\widetilde{\theta}-\theta_\lambda)+r_n,
\]
where the remainder satisfies
\[
\|r_n\|_2\le C\|\widetilde{\theta}-\theta_\lambda\|_2^2.
\]
Therefore
\[
\widetilde{\theta}-\theta_\lambda
=-\widehat{H}_\lambda(\theta_\lambda)^{-1}\widehat{\Psi}(\theta_\lambda)+R_n,
\]
with
\[
R_n=-\widehat{H}_\lambda(\theta_\lambda)^{-1}r_n,\qquad
\|R_n\|_2\le C\|\widetilde{\theta}-\theta_\lambda\|_2^2,
\]
where we used the lower bound for $\widehat{H}_\lambda(\theta_\lambda)$ established below. Since $\Psi(\theta_\lambda)=0$, we have $\widehat{\Psi}(\theta_\lambda)=\widehat{\Psi}(\theta_\lambda)-\Psi(\theta_\lambda)$.
\par
By Assumption~\ref{ass:variance-highlevel},
\[
H_\lambda(\theta_\lambda)
=\nabla_{\theta\theta}d\!\left(F_S,F_S^1(\cdot;F_{\theta_\lambda})\right)+2\lambda I
\succeq (c_d+2\lambda)I.
\]
If $c_d>0$, then with $h_\lambda=1+\lambda$ and $c_H=\min\{c_d,2\}$,
\[
\lambda_{\min}\!\bigl\{H_\lambda(\theta_\lambda)\bigr\}\ge c_H h_\lambda.
\]
If $c_d\le 0$ and $\lambda\ge 1-c_d$, then $c_d+2\lambda\ge \lambda+1\ge \lambda$, so with $h_\lambda=\lambda$ and $c_H=1$ we have
\[
\lambda_{\min}\!\bigl\{H_\lambda(\theta_\lambda)\bigr\}\ge c_H h_\lambda.
\]
To compare $\widehat{H}_\lambda(\theta_\lambda)$ and $H_\lambda(\theta_\lambda)$, note that for $d(\cdot,\cdot)=W_2^2(\cdot,\cdot)$,
\[
\widehat{H}_\lambda(\theta_\lambda)
=2\int_0^1\!\left[\nabla_\theta \widehat{q}^{1}(u;\theta_\lambda)\nabla_\theta \widehat{q}^{1}(u;\theta_\lambda)^\top
+\bigl\{\widehat{q}^{1}(u;\theta_\lambda)-\widehat{q}^{0}(u)\bigr\}\nabla_{\theta\theta}\widehat{q}^{1}(u;\theta_\lambda)\right]du+2\lambda I,
\]
with the analogous population formula for $H_\lambda(\theta_\lambda)$. Hence each entry of $\widehat{H}_\lambda(\theta_\lambda)-H_\lambda(\theta_\lambda)$ is a finite sum of integrals involving pointwise differences of $\widehat{q}^{0}-q^{0}$, $\widehat{q}^{1}-q^{1}$, $\nabla_\theta\widehat{q}^{1}-\nabla_\theta q^{1}$, and $\nabla_{\theta\theta}\widehat{q}^{1}-\nabla_{\theta\theta}q^{1}$, all evaluated at the fixed point $\theta_\lambda$. The first three terms are controlled exactly as below. For the last term, differentiating the quantile identity once more shows that $\nabla_{\theta\theta}\widehat{q}^{1}(u;\theta_\lambda)$ is again a ratio of empirical averages of bounded quantities involving $\partial_\theta\widehat{F}_S^1$, $\partial_{\theta\theta}\widehat{F}_S^1$, $\partial_\theta\widehat{f}_S^1$, and $\partial_s\widehat{f}_S^1$. Under the same local smoothness used for the Taylor remainder, these empirical averages converge pointwise to their population counterparts, so for each fixed $u$ the resulting second-derivative difference is $o_p(1)$. Dominated convergence then gives entrywise $o_p(1)$ for the Hessian difference, and since the parameter dimension is fixed,
\[
\|\widehat{H}_\lambda(\theta_\lambda)-H_\lambda(\theta_\lambda)\|_{\mathrm{op}}=o_p(1).
\]
Therefore
\[
\lambda_{\min}\!\bigl\{\widehat{H}_\lambda(\theta_\lambda)\bigr\}
\ge c_H h_\lambda-o_p(1),
\]
and hence
\[
\|\widehat{H}_\lambda(\theta_\lambda)^{-1}\|_2=O_p(h_\lambda^{-1}).
\]
It remains to control the pointwise fluctuation $\widehat{\Psi}(\theta_\lambda)-\Psi(\theta_\lambda)$. For $d(\cdot,\cdot)=W_2^2(\cdot,\cdot)$, write
\[
\widehat{\Psi}(\theta_\lambda)-\Psi(\theta_\lambda)=T_1+T_2+T_3,
\]
where
\begin{align*}
T_1&=2\int_0^1\!\big\{\widehat{q}^{1}(u;\theta_\lambda)-\widehat{q}^{0}(u)\big\}\big\{\nabla_\theta\widehat{q}^{1}(u;\theta_\lambda)-\nabla_\theta q^{1}(u;\theta_\lambda)\big\}\,du,\\
T_2&=2\int_0^1\!\big\{\widehat{q}^{1}(u;\theta_\lambda)-q^{1}(u;\theta_\lambda)\big\}\nabla_\theta q^{1}(u;\theta_\lambda)\,du,\\
T_3&=-2\int_0^1\!\big\{\widehat{q}^{0}(u)-q^{0}(u)\big\}\nabla_\theta q^{1}(u;\theta_\lambda)\,du.
\end{align*}
By the density lower bound in Assumption~\ref{ass:param}(iv), Lemma~\ref{lemma:quantile-bound}, and the DKW inequality, for each $u\in(0,1)$,
\[
\mathbb{E}\big|\widehat{q}^{0}(u)-q^{0}(u)\big|^2=O(n^{-1}),\qquad
\mathbb{E}\big|\widehat{q}^{1}(u;\theta_\lambda)-q^{1}(u;\theta_\lambda)\big|^2=O(m^{-1}).
\]
Also,
\[
  \nabla_\theta q^{1}(u;\theta)
  =-\frac{\partial_\theta F_S^1(s;F_\theta)\vert_{s=q^{1}(u;\theta)}}{f_S^1(q^{1}(u;\theta);F_\theta)},
\]
and the analogous identity holds for $\nabla_\theta\widehat{q}^{1}(u;\theta_\lambda)$. For fixed $u$, define
\[
A_m(u)=\partial_\theta \widehat{F}_S^1(\widehat{q}^{1}(u;\theta_\lambda);F_{\theta_\lambda}),\qquad
A(u)=\partial_\theta F_S^1(q^{1}(u;\theta_\lambda);F_{\theta_\lambda}),
\]
\[
B_m(u)=\widehat{f}_S^1(\widehat{q}^{1}(u;\theta_\lambda);F_{\theta_\lambda}),\qquad
B(u)=f_S^1(q^{1}(u;\theta_\lambda);F_{\theta_\lambda}).
\]
Since $\theta_\lambda$ is fixed after conditioning on $\mathcal{D}_{\rm aux}$, the empirical averages defining $\widehat{F}_S^1$, $\partial_\theta\widehat{F}_S^1$, and $\widehat{f}_S^1$ are pointwise averages of bounded random variables. Therefore, for each $u$,
\begin{align*}
\mathbb{E}\|A_m(u)-A(u)\|_2^2=O(m^{-1}),\qquad
\mathbb{E}|B_m(u)-B(u)|^2=O(m^{-1}).
\end{align*}
Combining these bounds with the quantile rate and the ratio identity
\[
\left\|\frac{A_m(u)}{B_m(u)}-\frac{A(u)}{B(u)}\right\|_2
\le \underline{L}_F^{-1}\|A_m(u)-A(u)\|_2+\underline{L}_F^{-2}\|A(u)\|_2\,|B_m(u)-B(u)|
\]
yields, for each $u$,
\[
\mathbb{E}\big\|\nabla_\theta\widehat{q}^{1}(u;\theta_\lambda)-\nabla_\theta q^{1}(u;\theta_\lambda)\big\|_2^2=O(m^{-1}).
\]
Since $\nabla_\theta q^1(u;\theta_\lambda)$ and $\widehat{q}^{1}(u;\theta_\lambda)-\widehat{q}^{0}(u)$ are uniformly bounded in $u$ by Assumption~\ref{ass:param}(ii),(iv), integrating the preceding pointwise bounds over $u$ gives
\begin{align*}
\mathbb{E}\|T_3\|_2^2=O(n^{-1}),\qquad
\mathbb{E}\|T_2\|_2^2=O(m^{-1}),\qquad
\mathbb{E}\|T_1\|_2^2=O(m^{-1}).
\end{align*}
Hence
\[
\mathbb{E}\|\widehat{\Psi}(\theta_\lambda)-\Psi(\theta_\lambda)\|_2^2=O(n^{-1}+m^{-1}),
\]
so the linear term in the expansion is of order $O_p\!\left(h_\lambda^{-1}(n^{-1/2}+m^{-1/2})\right)$. A standard fixed-point argument therefore gives
\[
\widetilde{\theta}-\theta_\lambda=O_p\!\left(h_\lambda^{-1}(n^{-1/2}+m^{-1/2})\right),\qquad
R_n=O_p\!\left(h_\lambda^{-2}(n^{-1}+m^{-1})\right).
\]
Consequently, the second-moment order is
\[
  \mathbb{E}\|\widetilde{\theta}-\theta_\lambda\|_2^2
  =O\!\left((n^{-1}+m^{-1})h_\lambda^{-2}\right),
\]
and since
\[
  \mathrm{tr}\{\mathrm{Cov}(\widetilde{\theta})\}
  =\mathbb{E}\|\widetilde{\theta}-\mathbb{E}\widetilde{\theta}\|_2^2
  \le \mathbb{E}\|\widetilde{\theta}-\theta_\lambda\|_2^2,
\]
we have
\[
  \mathrm{tr}\{\mathrm{Cov}(\widetilde{\theta})\}
  =O\!\left((n^{-1}+m^{-1})h_\lambda^{-2}\right).
\]
Therefore
\[
  \mathbb{E}(\widetilde{q}_{\rm St}-\widetilde{q}_{\rm St}')^2
  =O\!\left((n^{-1}+m^{-1})h_\lambda^{-2}\right).
\]

\textbf{Step 2 (control of $\mathbb{E}(\widehat{q}_{\rm St}-\widetilde{q}_{\rm St})^2$).}
Decompose around $\theta_\lambda$:
\begin{align*}
  \mathbb{E}(\widehat{q}_{\rm St}-\widetilde{q}_{\rm St})^2
  &\le 3\mathbb{E}\{\widehat{Q}(\widetilde{\theta})-\widehat{Q}(\theta_\lambda)\}^2
  +3\mathbb{E}\{\widetilde{Q}(\widetilde{\theta})-\widetilde{Q}(\theta_\lambda)\}^2\\
  &\qquad +3\mathbb{E}\{\widehat{Q}(\theta_\lambda)-\widetilde{Q}(\theta_\lambda)\}^2.
\end{align*}
By the Lipschitz property of $\widehat{Q}$ and $\widetilde{Q}$ in $\theta$ and Step 1, the first two terms are
\[
  O\!\left((n^{-1}+m^{-1})h_\lambda^{-2}\right).
\]
For the last term, fix $u_0=1-\alpha_n$ and write
\[
  q_\lambda=q^{1}(u_0;\theta_\lambda),
  \qquad
  \widehat{q}_\lambda=\widehat{q}^{1}(u_0;\theta_\lambda).
\]
Then
\[
  \widehat{Q}(\theta_\lambda)-\widetilde{Q}(\theta_\lambda)=\widehat{q}_\lambda-q_\lambda.
\]
Since $\widehat{F}_S^1(\cdot;F_{\theta_\lambda})$ has density lower bounded by $\underline{L}_F$ (Assumption~\ref{ass:param}(iv)), monotonicity yields
\[
  |\widehat{q}_\lambda-q_\lambda|
  \le \underline{L}_F^{-1}\left|\widehat{F}_S^1(q_\lambda;F_{\theta_\lambda})-F_S^1(q_\lambda;F_{\theta_\lambda})\right|.
\]
Also,
\[
  \widehat{F}_S^1(q_\lambda;F_{\theta_\lambda})
  =m^{-1}\sum_{j=1}^m F(q_\lambda\mid \widetilde{X}_j;\theta_\lambda),
\]
whose summands lie in $[0,1]$. Hence Hoeffding gives
\[
  \mathbb{P}\!\left(\left|\widehat{F}_S^1(q_\lambda;F_{\theta_\lambda})-F_S^1(q_\lambda;F_{\theta_\lambda})\right|>t\right)
  \le 2e^{-2mt^2},
\]
and therefore
\[
  \mathbb{E}\!\left|\widehat{F}_S^1(q_\lambda;F_{\theta_\lambda})-F_S^1(q_\lambda;F_{\theta_\lambda})\right|^2
  =O(m^{-1}).
\]
Consequently,
\[
  \mathbb{E}\{\widehat{Q}(\theta_\lambda)-\widetilde{Q}(\theta_\lambda)\}^2
  =\mathbb{E}(\widehat{q}_\lambda-q_\lambda)^2
  =O(m^{-1}).
\]
Thus
\[
  \mathbb{E}(\widehat{q}_{\rm St}-\widetilde{q}_{\rm St})^2
  =O\!\left((n^{-1}+m^{-1})h_\lambda^{-2}\right)+O(m^{-1}).
\]

Plugging these bounds into \eqref{eq:var-decomp-new},
\[
  \mathrm{Var}(\widehat{q}_{\rm St})
  =O\!\left((n^{-1}+m^{-1})h_\lambda^{-2}\right)+O(m^{-1})
  =O\!\left(m^{-1}+n^{-1}h_\lambda^{-2}\right).
\]
Hence
\[
  \mathrm{Var}(L(\widehat{C}_{\rm St}))
  =O\!\left(m^{-1}+n^{-1}h_\lambda^{-2}\right).
\]
\end{proof}
\begin{remark}
The proof of Theorem~\ref{theo:reduced-variance} also uses a local second-order expansion of $\widehat{\Psi}$ around $\theta_\lambda$. We do not state this as part of Assumption~\ref{ass:variance-highlevel}, since it is only a routine proof device: it follows, for example, if $\widehat{\Psi}$ is continuously differentiable in a neighborhood of $\theta_\lambda$ and its Jacobian is locally Lipschitz there. Under the same local smoothness that yields the pointwise Hessian consistency discussed in Appendix~\ref{app:variance-highlevel-verification}, the remainder satisfies $\|r(\theta)\|_2\le C\|\theta-\theta_\lambda\|_2^2$ locally.
\end{remark}

\subsection{Proof of Theorem~\ref{theo:marginal-sel}}

\begin{proof}
Since $\widehat{q}_{{\rm St\mbox{-}sel}}\in[q_{\rm L},q_{\rm U}]$, monotonicity of threshold sets implies
\[
  \{y:S(X_{n+1},y)\le q_{\rm L}\}
  \subseteq
  \widehat{C}_{\rm St\mbox{-}sel}(X_{n+1})
  \subseteq
  \{y:S(X_{n+1},y)\le q_{\rm U}\}.
\]
Therefore,
\[
  \mathbb{P}(S_{n+1}\le q_{\rm L})
  \le
  \mathbb{P}\!\left(Y_{n+1}\in \widehat{C}_{\rm St\mbox{-}sel}(X_{n+1})\right)
  \le
  \mathbb{P}(S_{n+1}\le q_{\rm U}).
\]

For any $\beta\in(0,1)$, let $q_\beta=Q(1-\beta;\widehat{F}_S^0)$ and $k_\beta=\lceil n(1-\beta)\rceil$. Then $q_\beta=S_{(k_\beta)}$, where $S_{(k)}$ is the $k$-th order statistic of $(S_1,\ldots,S_n)$. By exchangeability, the rank of $S_{n+1}$ among $(S_1,\ldots,S_n,S_{n+1})$ is uniform on $\{1,\ldots,n+1\}$, hence
\[
  \mathbb{P}(S_{n+1}\le S_{(k_\beta)})\in\left[\frac{k_\beta}{n+1},\frac{k_\beta+1}{n+1}\right).
\]
Since $k_\beta=\lceil n(1-\beta)\rceil$ implies
\[
  \frac{k_\beta}{n+1}\ge 1-\beta,
  \qquad
  \frac{k_\beta+1}{n+1}<1-\beta+\frac{1}{n+1},
\]
we get
\[
  1-\beta\le \mathbb{P}(S_{n+1}\le q_\beta)<1-\beta+\frac{1}{n+1}.
\]
Using $\beta=\alpha+\alpha_{\rm tol}$ gives
\[
  \mathbb{P}(S_{n+1}\le q_{\rm L})\ge 1-\alpha-\alpha_{\rm tol},
\]
and using $\beta=\alpha-\alpha_{\rm tol}$ gives
\[
  \mathbb{P}(S_{n+1}\le q_{\rm U})<1-\alpha+\alpha_{\rm tol}+\frac{1}{n+1}.
\]
Combining these inequalities proves the theorem.
\end{proof}

\section{Additional Experimental Results}

This section provides supplementary experiments, elaborating on experimental details and presenting additional real-data details not fully covered in the main text.

\subsection{Implementation Details}

For the ``Std'' metric, the calculation is performed across different repeats. Suppose a certain metric yields values $a_1,a_2,\ldots,a_{50}$ from 50 repeated runs. Then ``Std'' is computed as follows:
\[
    \text{Std}=\sqrt{\frac{1}{49}\sum_{i=1}^{50}(a_i-\bar{a})^2},
\]
where $\bar{a}$ denotes the sample mean of the $a_1,a_2,\dots,a_{50}$.

We next detail the specific calculation method for the percentage improvement reported for ``ours'' and ``ours-sel'' in the Std block of Table~\ref{tab:results-real}. Denote by $a_1$ the Std value of the reported StCP variant (either ``ours'' or ``ours-sel'') and $a_0$ the Std value of the corresponding oracle baseline. Let $a_{\rm ref}$ denote the reference baseline, defined as the smallest Std among ``base'', SDCP, and PPI whose marginal coverage remains within the acceptable range indicated in the Marginal block of Table~\ref{tab:results-real}. The reported improvement is calculated as
\[
    \left(1-\frac{a_1-a_0}{a_{\rm ref}-a_0}\right)\times 100\%.
\]

For clarity, we next detail the computation procedures for the corresponding base and oracle baselines. For the base method, the source-trained estimator used by the underlying GLCP-type or CQR-type procedure is obtained in the same way as in StCP using $\mathcal{A}(\mathcal{L}_N^\prime)$, after which the prediction set is constructed directly from $\mathcal{L}_n$ without the transductive alignment step. For the oracle baseline, we apply the same corresponding base procedure after augmenting calibration with additional labeled target data unavailable to StCP. This construction reduces the randomness caused by the limited target calibration set and thus serves as a meaningful upper benchmark.

For the direct plug-in baseline, the estimator $\widehat{F}_{S\mid X}$ is again obtained by $\mathcal{A}(\mathcal{L}_N^\prime)$. The marginal CDF is then approximated by $\widehat{F}_S^1$ without alignment or debiasing, and the final prediction set is constructed as $\widehat{C}_{\rm DP}=\{y:S(X_{n+1},y)\leq \widehat{q}_{\rm DP}\}$.

For StCP, we compute results for all $\lambda$ values within a predefined range selected via a grid search and report the set stability and other metrics corresponding to the $\lambda$ that yields the best Std result while maintaining marginal coverage. In the robustness analysis of $\lambda$, we also report median performance metrics across all $\lambda$ values rather than only the optimal one.

In solving the optimization problem in \eqref{eq:aligned-conditional-distribution}, we directly apply gradient descent for the tuned parameter.

\paragraph{Finite-grid Wasserstein alignment in implementation.}\label{app:wasserstein-grid}
The theoretical formulation with Wasserstein distance aligns the full quantile functions of $\widehat{F}_S^0$ and $\widehat{F}_S^1(\cdot;\widetilde{F}_{S\mid X})$. In implementation, we approximate this objective on a finite grid of $K$ quantile levels and explicitly include the target level $1-\alpha_n$ in that grid. Thus, at $\lambda=0$, the optimization aligns these $K$ grid points, in particular the point corresponding to the target conformal quantile. In our experiments this finite-grid alignment is practically attainable because the number of tuned parameters is large relative to $K$: we fine-tune a $100\times 100$ weight matrix, giving $k_0=10000\gg K$. This is why $\lambda=0$ is feasible in the implemented selection rule.

\subsection{Simulation Experiment Details and Additional Results}\label{app:sim-results}

We consider three representative synthetic regression settings whose noise levels are determined by three different sigma families: Quad, Softplus, and LogAbs. In all settings, the nominal coverage level is fixed at $1-\alpha=0.9$, the target noise scale is fixed at $\gamma_t=1$, and the source noise scale is set to $\gamma_s=1.2$. We use the same predictor and conditional-distribution estimators as in the real-data experiments, with hidden layers $(50,100,100,50)$, $d=5$, and $50$ repeated runs.

The synthetic source and target covariates are generated from shifted Gaussian distributions. Writing $\mathbf{1}_d$ for the $d$-dimensional all-one vector, we set $\mu_s=0$ and $\mu_t=\mathbf{1}_d/2\sqrt{d}$, and generate
\[
X' \sim N(\mu_s, I_d), \qquad
Y' = \frac{3}{d}\sum_{j=1}^d X'_j + \varepsilon', \qquad
\varepsilon' \mid X' \sim N\!\left(0, \sigma^2(X';\gamma_s,\eta)\right),
\]
\[
X \sim N(\mu_t, I_d), \qquad
Y = \frac{2}{d}\sum_{j=1}^d X_j + \varepsilon, \qquad
\varepsilon \mid X \sim N\!\left(0, \sigma^2(X;\gamma_t,\eta)\right),
\]
where $\eta$ denotes the sigma family. For the three displayed settings, the heteroscedastic scale functions are
\[
\begin{aligned}
\sigma_{\mathrm{quad}}(x;\gamma)
&= \sqrt{\gamma}\,\frac{\sum_{j=1}^d x_j^2}{\sqrt{d}}, \\
\sigma_{\mathrm{softplus}}(x;\gamma)
&= \sqrt{\gamma}\,\frac{\sum_{j=1}^d \log(1+e^{x_j})}{\sqrt{d}}, \\
\sigma_{\mathrm{logabs}}(x;\gamma)
&= \sqrt{\gamma}\,\frac{\sum_{j=1}^d \log(1+|x_j|)}{\sqrt{d}}.
\end{aligned}
\]
In each repetition, all source and target samples are generated independently from the corresponding laws above; the oracle benchmark uses an additional labeled target sample of size $2000$.

For the robustness study in $\lambda$, we fix $(n,m)=(30,500)$ and evaluate the full grid of candidate values used by StCP. For the sample-size studies, we vary $n$ in $\{30,100,500\}$ with $m=500$, and vary $m$ in $\{30,100,500\}$ with $n=30$. In the tables below, ``ours'' reports the best Std result over the $\lambda$ grid while maintaining marginal coverage, while ``ours-sel'' reports the result of the data-driven parameter-selection rule from Section~\ref{sec:para-sel}. In the Std block, the percentages shown for these two columns are always computed relative to the corresponding base method. Boldface in the Std and Size blocks is assigned by comparing against the baseline methods whose marginal coverage remains within the displayed acceptable range.
\begin{figure}[htbp]
\centering
\includegraphics[width=.95\linewidth]{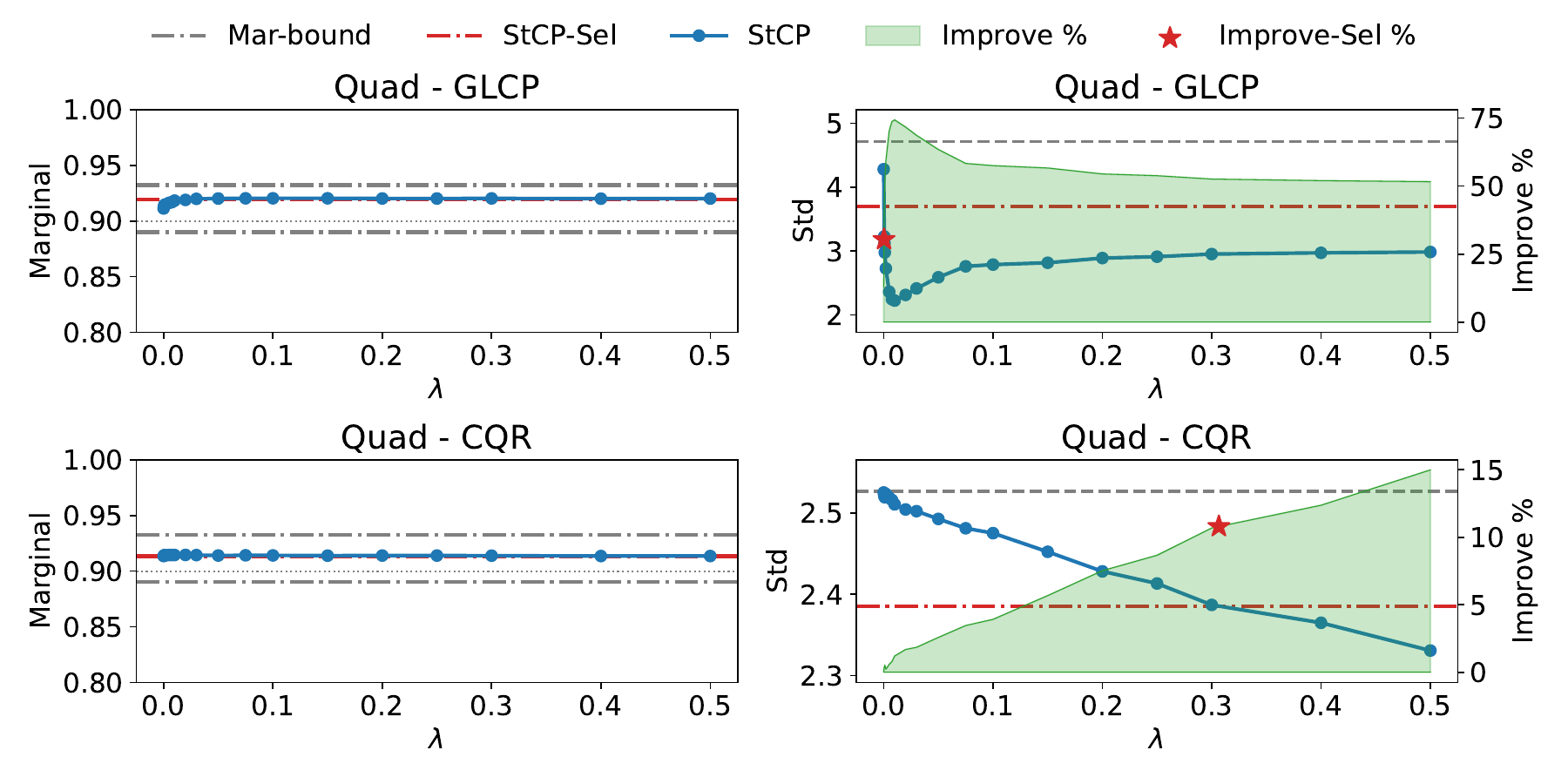}
\includegraphics[width=.95\linewidth]{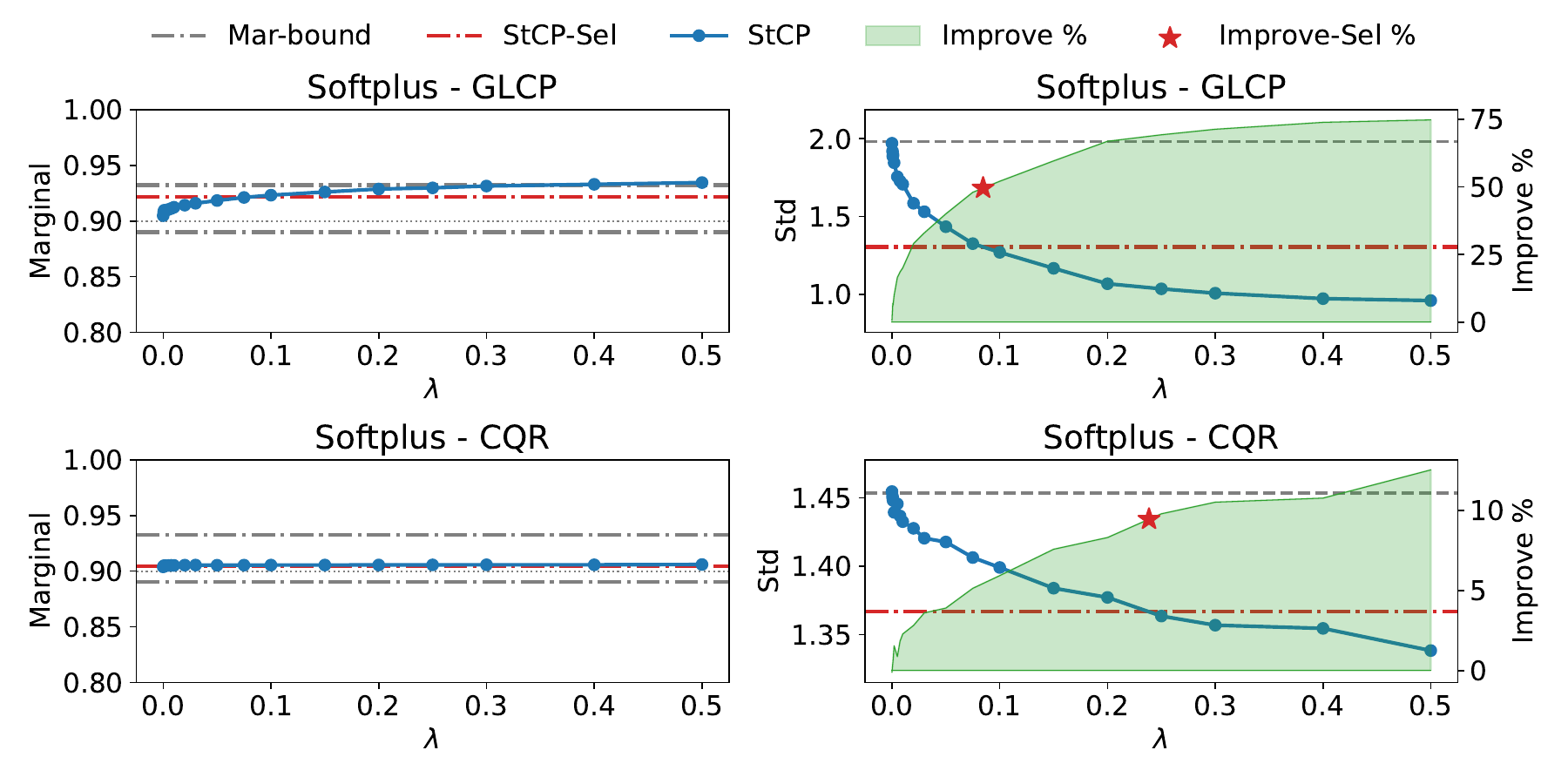}
\caption{Additional $\lambda$ sensitivity results under the Quad and Softplus settings. The Quad setting exhibits the strongest variance reduction for the GLCP-type method, while the Softplus setting follows a smoother trajectory.}
\label{fig:sim-lambda-other}
\end{figure}

Figure~\ref{fig:sim-lambda-other} complements the main-text LogAbs plot. The Quad setting delivers the clearest improvement for the GLCP-type method, whereas the Softplus setting provides a smoother and more conservative trajectory. In both settings, the CQR-type method tends to benefit less than the GLCP-type method, but the direction of improvement remains broadly aligned whenever the marginal coverage stays acceptable.

\begingroup
\fontsize{10.}{12.}\selectfont
\setlength{\tabcolsep}{3.5pt}
\begin{longtable}{ccccccccccc}
\caption{Simulation results with $m=500$ and $n\in\{30,100,500\}$ under the Quad and Softplus settings. In the Marginal block, ``$-$'' indicates marginal coverage below $1-\alpha-0.01$, and ``$+$'' indicates marginal coverage above $1-\alpha+1/(n+1)$.}\label{tab:sim-vary-n-other}\\
\toprule
&&&& \multicolumn{7}{c}{Method}\\
\cmidrule(lr){5-11}
& Setting & $n$ & Model & base & SDCP & PPI & ours & ours-sel & oracle & DP \\
\midrule
\endfirsthead
\multicolumn{11}{c}{\tablename\ \thetable{} -- continued from previous page}\\
\toprule
&&&& \multicolumn{7}{c}{Method}\\
\cmidrule(lr){5-11}
& Setting & $n$ & Model & base & SDCP & PPI & ours & ours-sel & oracle & DP \\
\midrule
\endhead
\midrule
\multicolumn{11}{r}{Continued on next page}\\
\midrule
\endfoot
\bottomrule
\endlastfoot
Std & Quad & $30$ & GLCP & $4.72$ & $6.71$ & $4.51$ & $\textbf{2.22}\,(52.8\%)$ & $\textbf{3.70}\,(21.6\%)$ & $1.36$ & $3.06$ \\
 &  &  & CQR & $2.53$ & $2.63$ & $2.44$ & $\textbf{2.33}\,(7.7\%)$ & $\textbf{2.39}\,(5.6\%)$ & $1.22$ & $3.08$ \\
 & Quad & $100$ & GLCP & $1.86$ & $\mathit{1.83}^-$ & $1.97$ & $\textbf{0.79}\,(57.6\%)$ & $\textbf{1.38}\,(25.8\%)$ & $0.55$ & $\mathit{0.82}^+$ \\
 &  &  & CQR & $1.24$ & $\mathit{1.13}^-$ & $1.21$ & $\textbf{1.05}\,(14.9\%)$ & $\textbf{1.05}\,(14.9\%)$ & $0.44$ & $\mathit{0.77}^+$ \\
 & Quad & $500$ & GLCP & $0.69$ & $\mathit{0.55}^-$ & $\mathit{0.82}^+$ & $\textbf{0.41}\,(40.6\%)$ & $\mathit{0.41}^+\,(41.4\%)$ & $\mathit{0.45}^+$ & $\mathit{0.45}^+$ \\
 &  &  & CQR & $0.46$ & $0.52$ & $\mathit{0.44}^+$ & $\textbf{0.41}\,(9.2\%)$ & $\textbf{0.41}\,(9.2\%)$ & $0.32$ & $\mathit{0.42}^+$ \\
 & Softplus & $30$ & GLCP & $1.98$ & $2.66$ & $1.94$ & $\textbf{1.01}\,(49.1\%)$ & $\textbf{1.30}\,(34.1\%)$ & $0.62$ & $\mathit{1.26}^+$ \\
 &  &  & CQR & $1.45$ & $1.55$ & $1.38$ & $\textbf{1.34}\,(7.9\%)$ & $\textbf{1.37}\,(6.0\%)$ & $0.53$ & $\mathit{1.25}^+$ \\
 & Softplus & $100$ & GLCP & $0.81$ & $\mathit{1.09}^-$ & $\mathit{0.80}^+$ & $\textbf{0.64}\,(20.9\%)$ & $\mathit{0.62}^+\,(23.2\%)$ & $0.24$ & $\mathit{0.70}^+$ \\
 &  &  & CQR & $0.69$ & $\mathit{0.84}^-$ & $0.68$ & $\textbf{0.62}\,(10.5\%)$ & $\textbf{0.62}\,(10.7\%)$ & $0.21$ & $\mathit{0.68}^+$ \\
 & Softplus & $500$ & GLCP & $0.36$ & $\mathit{0.36}^-$ & $0.39$ & $0.38\,(-2.9\%)$ & $\mathit{0.37}^+\,(-0.7\%)$ & $\mathit{0.21}^+$ & $\mathit{0.33}^+$ \\
 &  &  & CQR & $0.31$ & $\mathit{0.31}^-$ & $0.32$ & $\textbf{0.29}\,(8.4\%)$ & $\textbf{0.29}\,(8.1\%)$ & $0.19$ & $\mathit{0.29}^+$ \\
\midrule
Marginal & Quad & $30$ & GLCP & $0.914$ & $0.916$ & $0.917$ & $0.918$ & $0.919$ & $0.902$ & $0.921$ \\
 &  &  & CQR & $0.913$ & $0.917$ & $0.918$ & $0.914$ & $0.913$ & $0.900$ & $0.922$ \\
 & Quad & $100$ & GLCP & $0.902$ & $\mathit{0.858}^-$ & $0.905$ & $0.909$ & $0.906$ & $0.902$ & $\mathit{0.915}^+$ \\
 &  &  & CQR & $0.897$ & $\mathit{0.888}^-$ & $0.900$ & $0.895$ & $0.895$ & $0.901$ & $\mathit{0.913}^+$ \\
 & Quad & $500$ & GLCP & $0.901$ & $\mathit{0.873}^-$ & $\mathit{0.903}^+$ & $0.901$ & $\mathit{0.907}^+$ & $\mathit{0.903}^+$ & $\mathit{0.910}^+$ \\
 &  &  & CQR & $0.902$ & $0.891$ & $\mathit{0.902}^+$ & $0.898$ & $0.898$ & $0.902$ & $\mathit{0.909}^+$ \\
 & Softplus & $30$ & GLCP & $0.908$ & $0.926$ & $0.915$ & $0.931$ & $0.922$ & $0.901$ & $\mathit{0.943}^+$ \\
 &  &  & CQR & $0.904$ & $0.907$ & $0.909$ & $0.906$ & $0.904$ & $0.898$ & $\mathit{0.938}^+$ \\
 & Softplus & $100$ & GLCP & $0.906$ & $\mathit{0.865}^-$ & $\mathit{0.910}^+$ & $0.909$ & $\mathit{0.918}^+$ & $0.902$ & $\mathit{0.935}^+$ \\
 &  &  & CQR & $0.898$ & $\mathit{0.888}^-$ & $0.900$ & $0.899$ & $0.899$ & $0.900$ & $\mathit{0.931}^+$ \\
 & Softplus & $500$ & GLCP & $0.898$ & $\mathit{0.869}^-$ & $0.902$ & $0.901$ & $\mathit{0.916}^+$ & $\mathit{0.903}^+$ & $\mathit{0.934}^+$ \\
 &  &  & CQR & $0.900$ & $\mathit{0.886}^-$ & $0.900$ & $0.900$ & $0.900$ & $0.900$ & $\mathit{0.931}^+$ \\
\midrule
Size & Quad & $30$ & GLCP & $\textbf{12.81}$ & $14.56$ & $13.13$ & $\textbf{12.21}$ & $\textbf{12.80}$ & $10.70$ & $12.92$ \\
 &  &  & CQR & $\textbf{12.02}$ & $12.16$ & $12.15$ & $\textbf{11.90}$ & $\textbf{11.92}$ & $10.90$ & $12.46$ \\
 & Quad & $100$ & GLCP & $\textbf{9.92}$ & $\mathit{8.28}^-$ & $10.27$ & $10.05$ & $9.99$ & $9.57$ & $\mathit{10.56}^+$ \\
 &  &  & CQR & $\textbf{9.92}$ & $\mathit{9.60}^-$ & $10.03$ & $\textbf{9.81}$ & $\textbf{9.81}$ & $9.92$ & $\mathit{10.41}^+$ \\
 & Quad & $500$ & GLCP & $\textbf{9.20}$ & $\mathit{8.04}^-$ & $\mathit{9.42}^+$ & $\textbf{9.14}$ & $\mathit{9.44}^+$ & $\mathit{9.23}^+$ & $\mathit{9.80}^+$ \\
 &  &  & CQR & $9.58$ & $\textbf{9.22}$ & $\mathit{9.59}^+$ & $9.42$ & $9.42$ & $9.55$ & $\mathit{9.81}^+$ \\
 & Softplus & $30$ & GLCP & $\textbf{9.11}$ & $10.08$ & $9.36$ & $9.61$ & $9.33$ & $8.46$ & $\mathit{10.36}^+$ \\
 &  &  & CQR & $\textbf{8.96}$ & $9.09$ & $9.07$ & $8.97$ & $\textbf{8.95}$ & $8.56$ & $\mathit{10.04}^+$ \\
 & Softplus & $100$ & GLCP & $\textbf{8.18}$ & $\mathit{7.31}^-$ & $\mathit{8.35}^+$ & $8.24$ & $\mathit{8.52}^+$ & $7.97$ & $\mathit{9.19}^+$ \\
 &  &  & CQR & $\textbf{8.15}$ & $\mathit{7.97}^-$ & $8.20$ & $8.16$ & $8.16$ & $8.12$ & $\mathit{9.03}^+$ \\
 & Softplus & $500$ & GLCP & $\textbf{7.71}$ & $\mathit{7.07}^-$ & $7.84$ & $7.79$ & $\mathit{8.22}^+$ & $\mathit{7.84}^+$ & $\mathit{8.87}^+$ \\
 &  &  & CQR & $\textbf{7.97}$ & $\mathit{7.67}^-$ & $7.98$ & $\textbf{7.96}$ & $\textbf{7.97}$ & $7.95$ & $\mathit{8.78}^+$ \\
\midrule
Miscoverage & Quad & $30$ & GLCP & $0.019$ & $0.023$ & $0.023$ & $0.023$ & $0.023$ & $0.017$ & $0.026$ \\
 &  &  & CQR & $0.025$ & $0.025$ & $0.026$ & $0.024$ & $0.024$ & $0.024$ & $0.027$ \\
 & Quad & $100$ & GLCP & $0.017$ & $0.044$ & $0.020$ & $0.019$ & $0.018$ & $0.018$ & $0.023$ \\
 &  &  & CQR & $0.024$ & $0.027$ & $0.023$ & $0.024$ & $0.024$ & $0.023$ & $0.023$ \\
 & Quad & $500$ & GLCP & $0.018$ & $0.031$ & $0.020$ & $0.018$ & $0.018$ & $0.018$ & $0.022$ \\
 &  &  & CQR & $0.023$ & $0.025$ & $0.023$ & $0.023$ & $0.023$ & $0.023$ & $0.022$ \\
 & Softplus & $30$ & GLCP & $0.016$ & $0.028$ & $0.021$ & $0.032$ & $0.024$ & $0.016$ & $0.043$ \\
 &  &  & CQR & $0.022$ & $0.022$ & $0.022$ & $0.022$ & $0.022$ & $0.023$ & $0.039$ \\
 & Softplus & $100$ & GLCP & $0.015$ & $0.036$ & $0.019$ & $0.016$ & $0.021$ & $0.015$ & $0.036$ \\
 &  &  & CQR & $0.023$ & $0.026$ & $0.023$ & $0.023$ & $0.023$ & $0.023$ & $0.033$ \\
 & Softplus & $500$ & GLCP & $0.015$ & $0.033$ & $0.018$ & $0.015$ & $0.020$ & $0.015$ & $0.035$ \\
 &  &  & CQR & $0.023$ & $0.027$ & $0.023$ & $0.023$ & $0.023$ & $0.023$ & $0.033$ \\
\end{longtable}
\endgroup

\begingroup
\fontsize{10.}{12.}\selectfont
\setlength{\tabcolsep}{5.pt}
\begin{longtable}{ccccccccccc}
\caption{Simulation results with $n=30$ and $m\in\{30,100,500\}$ across three representative sigma settings. In the Marginal block, ``$-$'' indicates marginal coverage below $1-\alpha-0.01$, and ``$+$'' indicates marginal coverage above $1-\alpha+1/(n+1)$.}\label{tab:sim-vary-m}\\
\toprule
&&&& \multicolumn{7}{c}{Method}\\
\cmidrule(lr){5-11}
& Setting & $m$ & Model & base & SDCP & PPI & ours & ours-sel & oracle & DP \\
\midrule
\endfirsthead
\multicolumn{11}{c}{\tablename\ \thetable{} -- continued from previous page}\\
\toprule
&&&& \multicolumn{7}{c}{Method}\\
\cmidrule(lr){5-11}
& Setting & $m$ & Model & base & SDCP & PPI & ours & ours-sel & oracle & DP \\
\midrule
\endhead
\midrule
\multicolumn{11}{r}{Continued on next page}\\
\midrule
\endfoot
\bottomrule
\endlastfoot
Std & Quad & $30$ & GLCP & $4.72$ & $5.93$ & $4.67$ & $\textbf{2.43}\,(48.4\%)$ & $\textbf{3.83}\,(18.7\%)$ & $1.25$ & $3.46$ \\
 &  &  & CQR & $2.53$ & $2.70$ & $2.42$ & $2.47\,(2.1\%)$ & $2.50\,(1.0\%)$ & $1.19$ & $3.20$ \\
 & Quad & $100$ & GLCP & $4.72$ & $6.69$ & $4.52$ & $\textbf{2.14}\,(54.6\%)$ & $\textbf{3.69}\,(21.7\%)$ & $1.25$ & $2.90$ \\
 &  &  & CQR & $2.53$ & $2.69$ & $2.41$ & $\textbf{2.40}\,(5.0\%)$ & $2.44\,(3.2\%)$ & $1.22$ & $2.76$ \\
 & Quad & $500$ & GLCP & $4.72$ & $6.71$ & $4.51$ & $\textbf{2.22}\,(52.8\%)$ & $\textbf{3.70}\,(21.6\%)$ & $1.36$ & $3.06$ \\
 &  &  & CQR & $2.53$ & $2.63$ & $2.44$ & $\textbf{2.33}\,(7.7\%)$ & $\textbf{2.39}\,(5.6\%)$ & $1.22$ & $3.08$ \\
 & Softplus & $30$ & GLCP & $1.98$ & $2.18$ & $1.99$ & $\textbf{1.35}\,(31.8\%)$ & $\textbf{1.50}\,(24.5\%)$ & $0.64$ & $\mathit{1.53}^+$ \\
 &  &  & CQR & $1.45$ & $1.42$ & $1.38$ & $1.40\,(3.8\%)$ & $1.42\,(2.5\%)$ & $0.54$ & $\mathit{1.34}^+$ \\
 & Softplus & $100$ & GLCP & $1.98$ & $2.29$ & $1.96$ & $\textbf{1.02}\,(48.6\%)$ & $\textbf{1.32}\,(33.2\%)$ & $0.60$ & $\mathit{1.30}^+$ \\
 &  &  & CQR & $1.45$ & $1.44$ & $1.44$ & $\textbf{1.38}\,(5.2\%)$ & $\textbf{1.40}\,(4.0\%)$ & $0.51$ & $\mathit{1.24}^+$ \\
 & Softplus & $500$ & GLCP & $1.98$ & $2.66$ & $1.94$ & $\textbf{1.01}\,(49.1\%)$ & $\textbf{1.30}\,(34.1\%)$ & $0.62$ & $\mathit{1.26}^+$ \\
 &  &  & CQR & $1.45$ & $1.55$ & $1.38$ & $\textbf{1.34}\,(7.9\%)$ & $\textbf{1.37}\,(6.0\%)$ & $0.53$ & $\mathit{1.25}^+$ \\
 & LogAbs & $30$ & GLCP & $1.12$ & $1.61$ & $1.12$ & $\textbf{0.91}\,(18.2\%)$ & $\textbf{1.03}\,(7.7\%)$ & $0.34$ & $1.22$ \\
 &  &  & CQR & $0.98$ & $0.93$ & $0.95$ & $\textbf{0.92}\,(5.6\%)$ & $\textbf{0.93}\,(5.3\%)$ & $0.30$ & $1.05$ \\
 & LogAbs & $100$ & GLCP & $1.12$ & $1.64$ & $1.11$ & $\textbf{0.77}\,(31.4\%)$ & $\textbf{0.89}\,(20.2\%)$ & $0.31$ & $0.86$ \\
 &  &  & CQR & $0.98$ & $0.94$ & $0.94$ & $\textbf{0.87}\,(11.1\%)$ & $\textbf{0.87}\,(11.0\%)$ & $0.28$ & $0.82$ \\
 & LogAbs & $500$ & GLCP & $1.12$ & $\mathit{1.65}^+$ & $1.09$ & $\textbf{0.77}\,(31.2\%)$ & $\textbf{0.88}\,(20.7\%)$ & $0.36$ & $0.85$ \\
 &  &  & CQR & $0.98$ & $0.88$ & $0.94$ & $\textbf{0.82}\,(16.3\%)$ & $\textbf{0.83}\,(15.4\%)$ & $0.31$ & $0.81$ \\
\midrule
Marginal & Quad & $30$ & GLCP & $0.914$ & $0.924$ & $0.915$ & $0.918$ & $0.920$ & $0.902$ & $0.922$ \\
 &  &  & CQR & $0.913$ & $0.913$ & $0.917$ & $0.915$ & $0.915$ & $0.901$ & $0.926$ \\
 & Quad & $100$ & GLCP & $0.914$ & $0.927$ & $0.916$ & $0.913$ & $0.918$ & $0.901$ & $0.912$ \\
 &  &  & CQR & $0.913$ & $0.916$ & $0.918$ & $0.915$ & $0.915$ & $0.900$ & $0.912$ \\
 & Quad & $500$ & GLCP & $0.914$ & $0.916$ & $0.917$ & $0.918$ & $0.919$ & $0.902$ & $0.921$ \\
 &  &  & CQR & $0.913$ & $0.917$ & $0.918$ & $0.914$ & $0.913$ & $0.900$ & $0.922$ \\
 & Softplus & $30$ & GLCP & $0.908$ & $0.929$ & $0.912$ & $0.931$ & $0.922$ & $0.902$ & $\mathit{0.952}^+$ \\
 &  &  & CQR & $0.904$ & $0.910$ & $0.908$ & $0.908$ & $0.907$ & $0.899$ & $\mathit{0.946}^+$ \\
 & Softplus & $100$ & GLCP & $0.908$ & $0.929$ & $0.913$ & $0.932$ & $0.922$ & $0.902$ & $\mathit{0.943}^+$ \\
 &  &  & CQR & $0.904$ & $0.909$ & $0.907$ & $0.907$ & $0.906$ & $0.899$ & $\mathit{0.935}^+$ \\
 & Softplus & $500$ & GLCP & $0.908$ & $0.926$ & $0.915$ & $0.931$ & $0.922$ & $0.901$ & $\mathit{0.943}^+$ \\
 &  &  & CQR & $0.904$ & $0.907$ & $0.909$ & $0.906$ & $0.904$ & $0.898$ & $\mathit{0.938}^+$ \\
 & LogAbs & $30$ & GLCP & $0.899$ & $0.927$ & $0.904$ & $0.910$ & $0.910$ & $0.905$ & $0.928$ \\
 &  &  & CQR & $0.899$ & $0.891$ & $0.903$ & $0.906$ & $0.905$ & $0.900$ & $0.926$ \\
 & LogAbs & $100$ & GLCP & $0.899$ & $0.929$ & $0.906$ & $0.905$ & $0.907$ & $0.906$ & $0.915$ \\
 &  &  & CQR & $0.899$ & $0.897$ & $0.905$ & $0.904$ & $0.904$ & $0.901$ & $0.907$ \\
 & LogAbs & $500$ & GLCP & $0.899$ & $\mathit{0.935}^+$ & $0.909$ & $0.911$ & $0.910$ & $0.904$ & $0.921$ \\
 &  &  & CQR & $0.899$ & $0.896$ & $0.905$ & $0.903$ & $0.903$ & $0.899$ & $0.916$ \\
\midrule
Size & Quad & $30$ & GLCP & $\textbf{12.81}$ & $14.31$ & $13.06$ & $\textbf{12.26}$ & $12.87$ & $10.63$ & $13.32$ \\
 &  &  & CQR & $\textbf{12.02}$ & $12.07$ & $12.11$ & $12.08$ & $12.08$ & $10.92$ & $12.79$ \\
 & Quad & $100$ & GLCP & $\textbf{12.81}$ & $15.60$ & $13.06$ & $\textbf{11.77}$ & $\textbf{12.71}$ & $10.62$ & $\textbf{12.29}$ \\
 &  &  & CQR & $\textbf{12.02}$ & $12.18$ & $12.17$ & $\textbf{12.00}$ & $\textbf{12.01}$ & $10.91$ & $\textbf{11.94}$ \\
 & Quad & $500$ & GLCP & $\textbf{12.81}$ & $14.56$ & $13.13$ & $\textbf{12.21}$ & $\textbf{12.80}$ & $10.70$ & $12.92$ \\
 &  &  & CQR & $\textbf{12.02}$ & $12.16$ & $12.15$ & $\textbf{11.90}$ & $\textbf{11.92}$ & $10.90$ & $12.46$ \\
 & Softplus & $30$ & GLCP & $\textbf{9.11}$ & $9.95$ & $9.31$ & $9.67$ & $9.40$ & $8.50$ & $\mathit{10.97}^+$ \\
 &  &  & CQR & $\textbf{8.96}$ & $9.14$ & $9.06$ & $9.04$ & $9.03$ & $8.59$ & $\mathit{10.43}^+$ \\
 & Softplus & $100$ & GLCP & $\textbf{9.11}$ & $10.02$ & $9.30$ & $9.62$ & $9.34$ & $8.49$ & $\mathit{10.37}^+$ \\
 &  &  & CQR & $\textbf{8.96}$ & $9.11$ & $9.04$ & $9.02$ & $9.00$ & $8.59$ & $\mathit{9.92}^+$ \\
 & Softplus & $500$ & GLCP & $\textbf{9.11}$ & $10.08$ & $9.36$ & $9.61$ & $9.33$ & $8.46$ & $\mathit{10.36}^+$ \\
 &  &  & CQR & $\textbf{8.96}$ & $9.09$ & $9.07$ & $8.97$ & $\textbf{8.95}$ & $8.56$ & $\mathit{10.04}^+$ \\
 & LogAbs & $30$ & GLCP & $\textbf{5.19}$ & $5.95$ & $5.32$ & $5.39$ & $5.41$ & $5.04$ & $5.99$ \\
 &  &  & CQR & $5.24$ & $\textbf{5.09}$ & $5.30$ & $5.32$ & $5.31$ & $5.04$ & $5.80$ \\
 & LogAbs & $100$ & GLCP & $\textbf{5.19}$ & $6.01$ & $5.35$ & $5.24$ & $5.29$ & $5.07$ & $5.54$ \\
 &  &  & CQR & $5.24$ & $\textbf{5.18}$ & $5.32$ & $5.28$ & $5.28$ & $5.06$ & $5.33$ \\
 & LogAbs & $500$ & GLCP & $\textbf{5.19}$ & $\mathit{6.20}^+$ & $5.39$ & $5.36$ & $5.34$ & $5.04$ & $5.65$ \\
 &  &  & CQR & $5.24$ & $\textbf{5.14}$ & $5.32$ & $5.24$ & $5.24$ & $5.03$ & $5.48$ \\
\midrule
Miscoverage & Quad & $30$ & GLCP & $0.019$ & $0.027$ & $0.022$ & $0.022$ & $0.023$ & $0.017$ & $0.027$ \\
 &  &  & CQR & $0.025$ & $0.024$ & $0.025$ & $0.025$ & $0.025$ & $0.024$ & $0.030$ \\
 & Quad & $100$ & GLCP & $0.019$ & $0.029$ & $0.023$ & $0.020$ & $0.022$ & $0.017$ & $0.022$ \\
 &  &  & CQR & $0.025$ & $0.025$ & $0.026$ & $0.025$ & $0.025$ & $0.024$ & $0.023$ \\
 & Quad & $500$ & GLCP & $0.019$ & $0.023$ & $0.023$ & $0.023$ & $0.023$ & $0.017$ & $0.026$ \\
 &  &  & CQR & $0.025$ & $0.025$ & $0.026$ & $0.024$ & $0.024$ & $0.024$ & $0.027$ \\
 & Softplus & $30$ & GLCP & $0.016$ & $0.030$ & $0.020$ & $0.031$ & $0.024$ & $0.016$ & $0.052$ \\
 &  &  & CQR & $0.022$ & $0.023$ & $0.022$ & $0.022$ & $0.022$ & $0.023$ & $0.047$ \\
 & Softplus & $100$ & GLCP & $0.016$ & $0.030$ & $0.020$ & $0.032$ & $0.024$ & $0.016$ & $0.043$ \\
 &  &  & CQR & $0.022$ & $0.022$ & $0.022$ & $0.022$ & $0.022$ & $0.023$ & $0.036$ \\
 & Softplus & $500$ & GLCP & $0.016$ & $0.028$ & $0.021$ & $0.032$ & $0.024$ & $0.016$ & $0.043$ \\
 &  &  & CQR & $0.022$ & $0.022$ & $0.022$ & $0.022$ & $0.022$ & $0.023$ & $0.039$ \\
 & LogAbs & $30$ & GLCP & $0.014$ & $0.028$ & $0.016$ & $0.016$ & $0.016$ & $0.015$ & $0.029$ \\
 &  &  & CQR & $0.019$ & $0.022$ & $0.019$ & $0.019$ & $0.019$ & $0.020$ & $0.028$ \\
 & LogAbs & $100$ & GLCP & $0.014$ & $0.030$ & $0.016$ & $0.015$ & $0.015$ & $0.015$ & $0.020$ \\
 &  &  & CQR & $0.019$ & $0.020$ & $0.019$ & $0.019$ & $0.019$ & $0.020$ & $0.019$ \\
 & LogAbs & $500$ & GLCP & $0.014$ & $0.035$ & $0.017$ & $0.017$ & $0.016$ & $0.015$ & $0.023$ \\
 &  &  & CQR & $0.019$ & $0.020$ & $0.019$ & $0.019$ & $0.019$ & $0.020$ & $0.021$ \\
\end{longtable}
\endgroup

Tables~\ref{tab:sim-vary-n-other} and~\ref{tab:sim-vary-m} further confirm the main qualitative pattern. When $m$ is fixed and $n$ increases, the stability gain of StCP becomes smaller, which is consistent with the fact that auxiliary information is most useful when the labeled calibration sample is scarce. When $n$ is fixed and $m$ increases, the improvement generally strengthens from $m=30$ to $m=100$ and then stabilizes by $m=500$, indicating diminishing returns from additional unlabeled target data. This pattern is especially visible under Quad and Softplus.

\subsection{Real Data Experiment Details}\label{app:real-details}

We present the data partitioning strategy and supplementary experimental details omitted from the main text. Among the MedMNIST benchmarks, our experiments include DermaMNIST (DERMA) and TissueMNIST (TISSUE), where TISSUE is the largest 2D subset with over $200$K samples, and serves as a suitable large-scale classification dataset for our setting. As shown in \citet{medmnistv1}, standard baselines achieve relatively low accuracy on these two datasets, making them a meaningful testbed for evaluating conformal prediction stability.

\noindent\textbf{Data partitioning strategy.} For the BIO dataset (feature dimension $d=9$, response: size of the residue), the features themselves do not exhibit clear distinctions for agent partitioning. However, domain-specific scenarios may lead to systematic variations---for instance, certain protein studies may involve residues with atypically large sizes. To capture such distributional differences, we adopt a response-based partitioning scheme. Given a dataset $\{(x_1,y_1),\ldots,(x_n,y_n)\}$, we first compute the $\alpha_0$-quantile $y_0$ of the response variable. Each sample $(x_i,y_i)$ is then assigned a weight proportional to $K(|y_i-y_0|/\sigma_0)$, where $K(\cdot)$ is the Gaussian kernel function and $\sigma_0$ scales the deviation. Finally, we perform probability-proportional sampling without replacement to construct the agent-specific dataset. To induce meaningful distribution shifts across agents, we set the quantile parameter $\alpha_0$ to $0.9$ for BIO. The scale parameter $\sigma_0$ is defined as the difference between the $(\alpha_0+0.1)$ and $(\alpha_0-0.1)$ quantiles of the response variable, ensuring adaptive bandwidth for kernel weighting.

For the CRIME dataset ($d=15$, response: violent crimes per 100K population per district), where existing research primarily focuses on densely populated metropolitan areas and leaves less-populated regions understudied, we designate the district with the smallest population as the target agent and randomly partition the others. In the STAR dataset ($d=15$, response: ACT score), given that many studies examine the impact of early education on future achievement---particularly the negative effects of resource deprivation---we define students from rural schools in their early age as the target agent and randomly partition the remaining students.

For the classification dataset DERMA, the input image format for each sample is $28\times 28\times 3$. There are a total of $7$ classes, and the class distribution is highly imbalanced, with some classes accounting for a very small proportion. To simulate disease-susceptibility prediction for a particular high-risk group, where such a group contains a relatively higher proportion of rare classes, we sample from all classes as uniformly as possible to form the target data pool, while the remaining samples constitute the auxiliary data pool. When partitioning the auxiliary data, it is divided evenly according to the number of agents.

For the classification dataset TISSUE, the input image format for each sample is $28\times 28\times 1$, and there are a total of $8$ classes. Similar to DERMA, we construct the target data pool by sampling from all classes as uniformly as possible, while the remaining samples form the auxiliary data pool. The auxiliary data are then partitioned evenly according to the number of agents.

\noindent\textbf{Conditional coverage calculation.} Computing conditional coverage for prediction sets requires knowing the exact conditional distribution of $Y$ given $X$, which is unavailable in real-world experiments. For each test sample $(X,Y)$ and prediction set $\widehat{C}(X)$, we evaluate coverage via the indicator $\ind(Y\in\widehat{C}(X))$. We evaluate our method using relaxed conditional coverage. For regression datasets, we partition the feature space into $N_x$ disjoint subspaces, ensuring each test sample belongs to exactly one subspace. Over repeated trials, we compute the average coverage rate within each subspace as the conditional coverage metric. Let $p_1,\ldots,p_{N_x}$ denote the conditional coverage rates and $w_1,\ldots,w_{N_x}$ the proportions of samples in each subspace. The conditional miscoverage is then defined as $\sum_{i=1}^{N_x} w_i |p_i-(1-\alpha)|$. The partitioning strategy in this work first applies K-means clustering to the target agent's data to identify $N_x$ centroids, then divides the spatial domain into $N_x$ disjoint sub-regions based on these centroids, ensuring spatially coherent and interpretable partitions while maintaining alignment with the underlying data distribution. In our experiments we take $N_x=10$.

For classification datasets, the test-conditional coverage is directly partitioned based on the class labels of the samples. Assume there exist $N_y$ classes for the dataset, and let $p_1,\ldots,p_{N_y}$ denote the conditional coverage rates and $w_1,\ldots,w_{N_y}$ the proportions of samples in each class. The gap between the coverage and $1-\alpha$ is computed separately for each class, and then weighted by the proportion of each class to obtain the conditional miscoverage $\sum_{i=1}^{N_y} w_i |p_i-(1-\alpha)|$.

\end{document}